\newcommand{\ts}{\textsuperscript} 
\newtheorem{fact}{Fact}
\newcommand{\colspan}[1]{\texttt{ColSpan}(#1)}
\newcommand{\nullspace}[1]{\texttt{null}(#1)}
\newcommand{\rank}[1]{\texttt{rank}\left({#1}\right)}
\newcommand{\m}[1]{\texttt{m}({#1})}
\newcommand{\transpose}{\mathsf{T}}
\renewcommand{\Re}{\mathbb{R}}
\newcommand{\vct}[1]{\boldsymbol{#1}} 
\newcommand{\Rea}{\mathbb{R}}
\newcommand{\Ce}{\mathbb{C}}
\newcommand{\bigO}{\mathcal{O}}
\DeclareMathOperator*{\argmin}{argmin}
\DeclareMathOperator*{\argmax}{argmax}
\newcommand{\mat}[1]{\begin{bmatrix} #1 \end{bmatrix}}
\begin{document}

\title{Rank Reduction in Bimatrix Games \thanks{*The first author was fully supported by the United
States Military Academy and the Army Advanced
Civil Schooling (ACS) program. The second author gratefully acknowledges support from NSF CRII Grant 1565487 and NSF ECCS Grant 1610615.}
}

\author{Joseph L. Heyman \and Abhishek Gupta}

\institute{J.L. Heyman \at
            United States Military Academy Westpoint\\
              606 Thayer Rd, West Point, NY 10996\\
              Tel.: +1-719-494-3027\\
              \email{joseph.heyman@westpoint.edu}           
           \and
           A. Gupta \at
              2015 Neil Avenue, Columbus, OH 43210\\
              \email{gupta.706@osu.edu}
}

\maketitle

\begin{abstract}
The rank of a bimatrix game is defined as the rank of the sum of the payoff matrices of the two players. The rank of a game is known to impact both the most suitable computation methods for determining a solution and the expressive power of the game. Under certain conditions on the payoff matrices, we devise a method that reduces the rank of the game without changing the equilibrium of the game. We leverage matrix pencil theory and the Wedderburn rank reduction formula to arrive at our results. We also present a constructive proof of the fact that in a generic square game, the rank of the game can be reduced by 1, and in generic rectangular game, the rank of the game can be reduced by 2 under certain assumptions.

\keywords{Bimatrix Games \and Wedderburn Rank Reduction\and Matrix Pencils\and Strategic Equivalence in Games}
\end{abstract}

\section{Introduction}\label{sec:intro}
The study of game theory --- which models strategic interactions among rational agents --- has a rich history dating back to the formalization of the field by John \cite{neumann1928theorie}; see \cite{von1959theory} for an English translation. Nash equilibrium (NE) is the widely accepted solution approach for normal form games, and computing a solution in a $k$-player finite game is one of the fundamental problems in game theory. Due to the well known theorem by \cite{nash1951}, we know that every finite game has a solution, possibly in mixed strategies. However, outside of some restricted classes of games, which includes zero-sum games, rank-1 games, potential games, and coordination games, the computation of NE is shown to be computationally difficult in the worst case.

In a two-player bimatrix game in which the row player has $m$ pure strategies and the column player has $n$ pure strategies, the payoffs to the players can be represented as two matrices, $A$ and $B$ in $\Re^{m\times n}$. Define the rank of this game as the rank of the sum of the two payoff matrices, $\rank{A+B}$ \citep{kannan2007games,kannan2010games}. The rank of a game is known to impact both the most suitable computation methods for determining a solution and the expressive power of the game. For example, it is well known that zero-sum games, which are rank-0 games, can be solved via a linear programming approach. For approximate solutions, \cite{kannan2007games,kannan2010games} present an algorithm for computing approximate Nash equilibria of rank-$k$ games, defined as the class of games where $\rank{A+B}\leq k$, for some given $k$.  

The concept of \textit{strategic equivalence} between game theoretic models also enjoys a rich history, dating back to at least von Neumann and Morgenstern's book first published in 1944 \cite[p.~245]{von2007theory}. If two games are \textit{strategically equivalent}, then they have the same set of Nash equilibria. See Subsection \ref{subsec:stratEq} for many competing definitions of strategic equivalence between games. Operations that preserve the strategic equivalence of bimatrix games can modify the rank of the game. For example, the well-studied constant-sum game, in which the sum of the two payoff matrices equals a constant matrix, is strategically equivalent to a zero-sum game. However, the zero-sum game has rank zero, while the constant-sum game is a rank-$1$ game. Since the rank of a game influences both the most suitable solution techniques and the expressive nature of a game, one should be particularly interested in determining if a given game  is strategically equivalent to a lower rank game.

Accordingly, in this paper, given a game, we apply the classical theory of matrix pencils in conjunction with the Wedderburn rank reduction formula to determine whether or not the given game is strategically equivalent to a game of lower rank. If so, we also devise an algorithm for efficiently computing the strategically equivalent lower rank game.

\subsection{Prior Work}\label{subsec:priorWork}
Closely related to our work is the class of \textit{strategically zero-sum} games defined by \cite{moulin1978strategically}. They study the class of games in which no completely mixed NE can be improved upon via a correlation strategy and come to the conclusion that these games are the class of \textit{strategically zero-sum} games. For the bimatrix case, they provide a complete characterization of strategically zero-sum games \cite[Theorem~2]{moulin1978strategically}. \cite{kontogiannis2012mutual} introduce the notion of mutually concave games and show that such games are solvable in polynomial time. They then proceed to characterize this class of games and conclude that this class is precisely the class of strategically zero-sum games. They devise an algorithm that can compute the strategically equivalent zero-sum game in $\bigO{(m^3n^3)}$ steps.

Around the same time, \cite{isaacson1980class} studied a class of bimatrix games that they characterized as \textit{row-constant games} . They define row-constant games as those bimatrix games where the sum of the payoff matrices is a matrix with constant rows. In their work, they show that the NE strategies of a row-constant game can be found via solving the zero-sum game $(m,n,A,-A)$. Comparing \cite{isaacson1980class} and \cite{moulin1978strategically}, one can easily see that \textit{row-constant} games form a subclass of \textit{strategically zero-sum} games.  

Related to strategically zero-sum bimatrix games are the class of \textit{strictly competitive games} \citep{aumann1961almost}. In a strictly competitive game, if both players change their mixed strategies, then either the payoffs remain unchanged, or one of the two payoffs increases while the other payoff decreases. In other words, all possible outcomes are Pareto optimal. It has long been claimed that strictly competitive games share many common and desirable NE features with zero-sum games, such as ordered interchangeability, NE payoff equivalence, and convexity of the NE set. Indeed, Aumann claims that strictly competitive games are equivalent to zero-sum games \citep{aumann1961almost}. Moulin and Vial proceed to cite Aumann's claim when arguing that strictly competitive games form a subclass of strategically zero-sum games \cite[Example~2]{moulin1978strategically}. However, many years later \cite{adler2009note} conducted a literature search and found that the claim of equivalence of strictly competitive games and zero-sum games was often repeated, but without a formal proof. They then proceeded to prove that this claim does indeed hold true.

Another class of games that shares some NE properties with zero-sum games is the class of \textit{(weakly) unilaterally competitive games}. As defined by \cite{kats1992unilaterally}, a game is unilaterally competitive if a unilateral change in strategy by one player results in a (weak) increase in that player's payoff if and only if it results in a (weak) decrease in the payoff of all other players. A game is weakly unilaterally competitive if a unilateral change in strategy by one player results in a strict increase in that player's payoff, then the payoffs of all other players (weakly) decrease.  If the payoff of the player who makes the unilateral move remains unchanged, then the payoffs of all players remain unchanged. For the two-player case, strictly competitive games form a subclass of unilaterally competitive games, and unilaterally competitive games form a subclass of weakly unilaterally competitive games. For the bimatrix case, \cite{kats1992unilaterally} show that (weakly) unilaterally competitive games have the ordered interchangeability and NE payoff equivalence properties. However, convexity of the NE set is only proved for infinite games with quasiconcave payoff functions. 

More recently, in a series of works \cite{adsul2011rank,adsul2020fast} the authors have developed polynomial time algorithms for solving another subclass of bimatrix games called rank-$1$ games. Rank-$1$ games are defined as games where the sum of the two payoff matrices is a rank-$1$ matrix. As far as we are aware, no other authors have charaterized games that are strategically equivalent to rank-$1$ games.

There is a significant body of research into computation of approximate Nash equilibria in bimatrix games; see, for instance, \cite{bosse2007new,daskalakis2006note,barman2018approximating,lipton2003,daskalakis2007progress,kontogiannis2006polynomial,kontogiannis2010well,alon2013approximate}. The approximation bounds of the Nash equilibria computed by these algorithms depend on sparsity of the sum of payoff matrices \cite{barman2018approximating}, rank of the payoff matrices \cite{lipton2003}, or the rank of the sum of payoff matrices \cite{kannan2010games,kannan2007games}. We believe that our algorithm can be used as a preprocessing step to reduce the rank of the game first and then apply the algorithm to compute an approximate Nash equilibrium.

\subsection{Notation}\label{subsec:notation}
We use $\vct{1}_n$ and $\vct{0}_n$ to denote, respectively, the all ones and all zeros vectors of length $n$. All vectors are annotated by bold font, e.g $\vct{u}$, and all vectors are treated as column vectors. $\Delta_n$ is the set of probability distributions over $ \{1,\ldots,n\}$, where $\Delta_{n}=\big\{\mathbf{p} \mid p_i\geq0,\forall i\in \{1,\dots,n\},\sum_{i=1}^n p_i = 1  \big\}$.
Let $\vct{e}_j$, $j \in \{1,2,...,n\}$, denote the vector with $1$ in the $j$\ts{th} position and $0$'s elsewhere. 

Consider a matrix $C$. We use $\rank{C}$ to indicate the rank of the matrix $C$. $\colspan{C}$ indicates the subspace spanned by the columns of the matrix $C$, also known as the column space or the range of the matrix $C$. We indicate the nullspace of the matrix $C$, the space containing all solutions to $Cx=\vct{0}_m$, as $\nullspace{C}$. In addition, we use $C^{(j)}$ to denote the $j$\ts{th} column of $C$ and $C_{(i)}$ to denote the $i$\ts{th} row of $C$. 

\subsection{Outline of the Paper}
In the next section, we review some game theoretic concepts. In Section \ref{sec:ser}, we discuss various notions of strategic equivalence used in the literature and implications of those definitions. In this paper, we will use the definition of strategic equivalence introduced in \cite{moulin1978strategically}. We present the main result in Section \ref{sec:MainResult}. In that section, we first briefly review the Wedderburn rank reduction formula and the theory of matrix pencils. We then introduce the overall rank reduction algorithm in Subsection \ref{sub:MainResult}. We then devise in Section \ref{sec:rank1Pencils} a novel algorithm for determining the solution to the rank-1 matrix pencil problem, assuming such a solution exists. Next, we present some consequences of our results when applied to generic (random) games in Section \ref{sec:generic}. We present a simple numerical example in Section \ref{sec:numerical} and discuss the complexity of our algorithm in this section. We finally conclude the paper in Section \ref{sec:conclusion}.

\section{Preliminaries}\label{sec:prelim}
In this section, we recall some basic definitions in bimatrix games and the definition of strategic equivalence in bimatrix games.

We consider here a two player game, in which player 1 (the row player) has $m$ actions and player 2 (the column player) has $n$ actions. Player 1's set of pure strategies is denoted by $S_1=\{1,\dots,m\}$ and player 2's set of pure strategies is $S_2=\{1,\dots,n\}$. If the players play pure strategies $(i,j)\in S_1 \times S_2$, then player 1 receives a payoff of $a_{ij}$ and player 2 receives $b_{ij}$.

We let $A=[a_{ij}]\in \Rea^{m\times n}$ represent the payoff matrix of player 1 and $B=[b_{ij}]\in \Rea^{m\times n}$ represent the payoff matrix of player 2. As the two-player finite game can be represented by two matrices, this game is commonly referred to as a bimatrix game. The bimatrix game is then defined by the tuple $(m,n,A,B)$. Define the $m\times n$ matrix $C$ as the sum of the two payoff matrices, $C \coloneqq A+B$. We define the rank of a game as $\rank{C}$. Some authors define the rank of the game to be the maximum of the rank of the two matrices $A$ and $B$, but this is not the case here. 

Players may also play mixed strategies, which correspond to a probability distribution over the available set of pure strategies. Player 1 has mixed strategies $\vct{p}$ and player 2 has mixed strategies $\vct{q}$, where $\vct{p}\in \Delta_m$ and $\vct{q}\in \Delta_n$. Using the notation introduced above, player 1 has expected payoff $\vct{p}^\transpose A \vct{q}$ and player 2 has expected payoff $\vct{p}^\transpose B\vct{q}$. 

\subsection{Nash Equilibrium in Bimatrix Games}\label{subsec:stratEq}
To introduce the notion of Nash equilibrium, we first need to define the best response correspondence. For the payoff matrices $A,B$, define the best response correspondences $\Gamma_A:\Delta_n\rightrightarrows \Delta_m$ and $\Gamma_B:\Delta_m\rightrightarrows \Delta_n$ as
\begin{align*}
 \Gamma_A(\vct q) &= \{\vct p\in\Delta_m: \vct p^\transpose A \vct q = \max_i [A\vct q]_i \},\\
 \Gamma_B(\vct p) &= \{\vct q\in\Delta_n: \vct p^\transpose B \vct q = \max_j [\vct p^TB]_j \}.
\end{align*}

A Nash Equilibrium is defined as a tuple of strategies $(\vct{p}^*,\vct{q}^*)$ such that each player's strategy is an optimal response to the other player's strategy. In other words, neither player can benefit, in expectation, by unilaterally deviating from the Nash Equilibrium. This is made precise in the following definition.
\begin{definition}[Nash Equilibrium \citep{nash1951}]\label{def:NE}
	A pair $(\vct{p}^*,\vct{q}^*)$ of mixed strategies is a Nash Equilibrium (NE) if and only if $\vct p^*\in \Gamma_A(\vct q^*)$ and $\vct q^*\in \Gamma_B(\vct p^*)$.
\end{definition}
It is a well known fact due to \cite{nash1951} that every bimatrix game with a finite set of pure strategies has at least one NE, possibly in mixed strategies.

\section{Strategically Equivalent Games and Problem Formulation}\label{sec:ser}
Given two different games with the same set of pure strategies of both players, under what conditions would they have the same set of Nash equilibria? Such a topic is generally studied under the umbrella of ``strategically equivalent games''. There could be multiple definitions of strategic equivalence of bimatrix games, depending on what we would like to preserve between the two games:
\begin{enumerate}
 \item For each player, for {\it every pure strategy} of the other player, the {\it preference ordering} of the player's pure strategies remains the same in both games (Definition \ref{def:possieri}).
 \item For each player, for {\it every mixed strategy} of the other player, the {\it sets of best response mixed strategies} of the player remain the same in both games (Definition \ref{def:liu}).
 \item For each player, for {\it every mixed strategy} of the other player, the {\it preference ordering} of the player's mixed strategies remain the same in both games (Definition \ref{def:stratEqNE}).
\end{enumerate}
It is clear that if two games are strategically equivalent according to Definition \ref{def:possieri}, then the sets of pure strategy Nash equilibria of both games are the same. On the other hand, if two games are strategically equivalent according to Definitions \ref{def:liu} and \ref{def:stratEqNE}, then the sets of all mixed strategy Nash equilibria of both games are the same. We present a brief overview of these definitions along with some consequences.

The first definition of strategically equivalent games that preserves the preference over the pure best responses of the players is proposed in \cite{hespanha2017noncooperative}.
\begin{definition}[\cite{possieri2017algebraic}, \cite{hespanha2017noncooperative}] \label{def:possieri}
Two games $(m,n,A,B)$ and $(m,n,\tilde{A},\tilde{B})$ are strongly strategically equivalent iff there exist two monotone strictly increasing functions $\phi_1,\phi_2:\Re\rightarrow\Re$ such that $\tilde a_{ij} = \phi_1(a_{ij})$ and $\tilde b_{ij} = \phi_2(b_{ij})$.
\end{definition}
As we noted above, this definition of strategic equivalence between games imply that the games have the same set of pure strategy Nash equilibria, assuming they exist.

Another definition of strategic equivalence, based on preservation of best response correspondence, is given in an unpublished paper \citep{liu1996invariance}. This notion of strategic equivalence is also discussed in \cite[p. 52-53]{myerson2013game}.
\begin{definition}[\cite{liu1996invariance}]\label{def:liu}
Two games $(m,n,A,B)$ and $(m,n,\tilde A,\tilde B)$ are strategically equivalent if and only if for any $\vct p\in\Delta_m$ and $\vct q\in\Delta_n$, we have $\Gamma_A(\vct q) = \Gamma_{\tilde A}(\vct q)$ and $\Gamma_B(\vct p) = \Gamma_{\tilde B}(\vct p)$.
\end{definition}
Another definition of strategic equivalence, based on preservation of preference ordering for all mixed strategies of the other player, was presented in \cite{moulin1978strategically}, which is given below. 
\begin{definition}[\cite{moulin1978strategically}, Def. 1, p. 205]\label{def:stratEqNE}
Two games $(m,n,A,B)$ and $(m,n,\tilde A,\tilde B)$ are strategically equivalent if and only if for any $\bar{\vct{p}},\vct p\in\Delta_m$ and $\bar{\vct{q}},\vct q\in\Delta_n$, we have
\begin{align*}
\bar{\vct{p}}^\transpose A \vct q \geq \vct p^\transpose A \vct q & \Longleftrightarrow \bar{\vct{p}}^\transpose \tilde A \vct q \geq \vct p^\transpose \tilde A \vct q\\
\vct p^\transpose B \bar{\vct{q}} \geq \vct p^\transpose A \vct q & \Longleftrightarrow \vct p^\transpose \tilde B \bar{\vct{q}} \geq \vct p^\transpose \tilde B \vct q. 
\end{align*}
\end{definition}

This notion of strategic equivalence puts more stringent requirements on the two games in comparison to the one in Definition \ref{def:liu}. Indeed, this was shown by \cite{liu1996invariance} via an example comprising two bimatrix games that have different preference orderings of the players, but have the same set of best response correspondences for both the players in the two games.

To ascertain strategic equivalence between the two games according to Definition \ref{def:stratEqNE}, we need to check that the above condition holds for all possible $\bar{\vct{p}},\vct p\in\Delta_m$ and $\bar{\vct{q}},\vct q\in\Delta_n$. This is obviously difficult using the brute-force computational approach. It turns out that two games that are strategically equivalent according to the above definition satisfy a simple property:

\begin{definition}[Positive affine transformation]\label{def:PAT}
 $(m,n,\tilde A,\tilde B)$ is a positive affine transformation (PAT) of $(m,n,A,B)$ if and only if there exists $\alpha_1,\alpha_2 \in \Rea_{>0}$, $\vct{u}\in\Rea^n$, and $\vct{v}\in\Rea^m$ such that $\tilde{A} = \alpha_1A+\vct{1}_m \vct{u}^\transpose$ and $\tilde{B} = \alpha_2B+\vct{v}\vct{1}_n^\transpose$. We define the PAT correspondence $\Upsilon:\Re^{m\times n}\times \Re^{m\times n}\rightrightarrows \Re^{m\times n}\times \Re^{m\times n}$ as follows:
\begin{align*}
 \Upsilon(A,B) = \Big\{ (\tilde A,\tilde B) \in \Re^{m\times n}\times \Re^{m\times n}: (\tilde A,\tilde B) \text{ is a PAT of } (A,B)\Big\}.
\end{align*}
\end{definition}

PAT is a commonly studied game transformation within the game theory community; see, for example, the discussion in \cite[p. 52]{myerson2013game}. It is immediate that PAT transformations preserve the preference orderings of the players for any given strategy of the other player; consequently, if two games are a PAT of each other, then they are strategically equivalent according to Definition \ref{def:stratEqNE}. The surprising result in \cite{moulin1978strategically} is that in fact, the converse also holds.

\begin{lemma}\label{lem:stratEqVec}
	Two games $(m,n,A,B)$ and $(m,n,\tilde A,\tilde B)$ are strategically equivalent according to Definition \ref{def:stratEqNE} if and only if they are PAT of each-other.
\end{lemma}
\begin{proof}
This result is established in \cite[Theorem 1, p. 206]{moulin1978strategically}.
\qed
\end{proof}
It is also straightforward to verify the following result.

\begin{lemma}
If two games are strategically equivalent according to Definition \ref{def:stratEqNE}, then they are strategically equivalent according to Definition \ref{def:liu}.
\end{lemma}

Given that PAT and the notion of strategic equivalence due to Definition \ref{def:stratEqNE} are equivalent, we will use Definition \ref{def:PAT} throughout this text. It is clear from the above discussion that the PAT operation can transform the rank of the game. In this paper, we ask the following question: Given a game $(m,n,\tilde{A},\tilde{B})$, is there an algorithm that determines the parameters $(\alpha_1,\alpha_2,\vct{u},\vct{v})$, such that $(m,n,\tilde{A},\tilde{B})$ is strategically equivalent to a lower rank game $(m,n,A,B)$ via a PAT? In the process, we solve the following problem:
\begin{align}\label{eqn:probFor}
 (\hat A,\hat B) \in \argmin_{(A,B)\in \Upsilon(\tilde A,\tilde B)} \rank{A+B}.
\end{align}
We proceed to solving this problem in the next section.

\section{Main Result}\label{sec:MainResult}
From Definition \ref{def:PAT}, it is straightforward to derive the following algebraic relationship for a PAT: If there exists $\alpha_1,\alpha_2 \in \Re_{>0}$, $\vct{u}\in\Re^n$, and $\vct{v}\in\Re^m$ such that:
\begin{align}
    \tilde{A} &= \alpha_1A+\vct{1}_m \vct{u}^\transpose, \label{eq:AtildeRankk}\\
    \tilde{B} &= -\alpha_2A+\vct{v}\vct{1}_n^\transpose +\alpha_2\sum_{i=1}^{k}\vct{r}_i\vct{c}_i^\transpose \label{eq:BtildeRankk}
\end{align}
then $(m,n,\tilde{A},\tilde{B})$ is strategically equivalent  to the rank-$k$ game $(m,n,A,-A+\sum_{i=1}^{k}\vct{r}_i\vct{c}_i^\transpose)$ via a positive affine transformation (PAT). To see this, we combine \eqref{eq:AtildeRankk} and \eqref{eq:BtildeRankk} to get

\begin{align}
	\tilde{A} &= -\frac{\alpha_1}{\alpha_2} \tilde{B}+\alpha_1\sum_{i=1}^{k}\vct{r}_i\vct{c}_i^\transpose+ \vct{1}_m \vct{u}^\transpose+\frac{\alpha_1}{\alpha_2}\vct{v}\vct{1}_n^\transpose. \label{eq:AtildeRank1Combined}
\end{align}

Defining $\gamma\coloneqq\frac{\alpha_1}{\alpha_2}$, $\hat{\vct{u}}\coloneqq\vct{u}$, $\hat{\vct{v}}\coloneqq\gamma\vct{v}$, and letting $\alpha_1\vct{r}_i\vct{c}_i^\transpose=\hat{\vct{r}}_i\hat{\vct{c}}_i^\transpose$, we rewrite \eqref{eq:AtildeRank1Combined} as:
\begin{equation}\label{eq:AtildeGammaBtilde}
    \tilde{A}+\gamma\tilde{B}=\vct{1}_m\hat{\vct{u}}^\transpose+\hat{\vct{v}}\vct{1}_n^\transpose+\sum_{i=1}^{k}\hat{\vct{r}}_i\hat{\vct{c}}_i^\transpose.
\end{equation}

From \eqref{eq:AtildeGammaBtilde}, we see that any algorithm that determines the parameters $(\gamma,\vct{\hat{u}},\vct{\hat{v}})$ solves the problem of finding $(A,B)\in \Upsilon(\tilde A,\tilde B)$ such that $\rank{A+B}\leq\texttt{rank} (\tilde{A}+\tilde{B})$. The uniqueness of our algorithm that we present in this section is that it solves problem \eqref{eqn:probFor}. In other words, we find a game, $(m,n,\hat{A},\hat{B})$, that is both a minimum rank game and a PAT of the original game, $(m,n,\tilde{A},\tilde{B})$.

Our approach to solving problem \eqref{eqn:probFor} is to decompose it into two parts based on solving $(A,B)\in \Upsilon(\tilde A,\tilde B)$ and a third part based on constructing the minimum rank game, $(m,n,\hat{A},\hat{B})$. These three parts are:

\begin{enumerate}
    \item Given $(m,n,\tilde{A},\tilde{B})$, we apply the Wedderburn rank reduction formula to $\tilde A$ and $\tilde B$ to arrive at matrices $\bar A$ and $\bar B$ such that $\vct 1_m\not\in\colspan{\bar A},\colspan{\bar B}$ $\vct 1_n\not\in\colspan{\bar A^\transpose},\colspan{\bar B^\transpose}$.\label{prob:findD}
    \item We use matrix pencil theory to compute $\gamma^*\in\Re_{>0}$ such that $\rank{\bar{A}+\gamma^*\bar{B}}$ is as small as possible. \label{prob:findGamma}
    \item Using $\gamma^*$, we construct the strategically equivalent game $(m,n,\hat{A},\hat{B})$ \label{prob:findhatAhatB}
\end{enumerate}

We now introduce the Wedderburn rank reduction formula and show how that can reduce the rank of a game by at most two. Following that, we discuss matrix pencils, an existing canonical form for calculating the eigenvalues of rectangular pencils, and show how such a canonical form can be applied to obtain a bimatrix game of lower rank than the original game. Finally, we conclude the section with the statement and proof of our main result.

\subsection{Rank Reduction via the Wedderburn Rank Reduction Formula}\label{subsec:furtherReduction}

The Wedderburn Rank Reduction formula is a classical technique in linear algebra that allows one to reduce the rank of a matrix by subtracting a specifically formulated rank-1 matrix. By repeated applications of the formula, one can obtain a matrix decomposition as the sum of multiple rank-1 matrices. In contrast to other well-known matrix factorization algorithms, such as singular value decomposition, the Wedderburn rank reduction formula allows almost limitless flexibility in choosing the basis of the rank-1 matrices that are subtracted at each iteration. For further reading on the Wedderburn rank reduction formula, we refer the reader to \citet[p.~69]{wedderburn1934lectures}, or to the excellent treatment of the topic by \cite{chu1995rank}.

We now proceed to state Wedderburn's original theorem. Following that, we show how one can exploit the flexibility of the decomposition to extract specifically formulated rank-1 matrices that allow us, when certain conditions hold true, to reduce the rank of a bimatrix game.

\begin{theorem}[{\citet[p.~69]{wedderburn1934lectures},\citet{chu1995rank}}] \label{thm:wedderburn}
	Let $C\in\Re^{m\times n}$ be a non-zero matrix. Then,  the following holds:
	\begin{enumerate}
	\item There exists vectors $\vct{x}_1\in \Re^n$ and $\vct{y}_1\in \Re^m$ such that $w_1=\vct{y}_1^\transpose C \vct{x}_1\neq0$. Then, the matrix
	\begin{equation} \label{eq:Wedderburn}
	C_2 \coloneqq C-w_1^{-1}C\vct{x}_1\vct{y}_1^\transpose C
	\end{equation}
	has rank exactly one less than the rank of $C$.
	 \item A vector $\vct z\in\colspan{C_2}$ if and only if $\vct z\in \colspan{C}$ and $\vct z\perp \vct y_1$. Similarly, $\vct z\in\colspan{C_2^\transpose}$ if and only if $\vct z\in \colspan{C^\transpose}$ and $\vct z\perp \vct x_1$.
	 \item We have $C\vct x_1\not\in \colspan{C_2}$ and $C^\transpose\vct y_1\not\in\colspan{C_2^\transpose}$.
    \end{enumerate}
\end{theorem}
\begin{proof}
The proof of the first assertion is due to \citet[p.~69]{wedderburn1934lectures}. It essentially shows that the null space of $C_2$ is equal to the span of the null space of $C$ and $\vct x_1$. The second assertion is established in Proposition \ref{prop:colSpanC2} in the appendix. The third assertion is a consequence of the second assertion and the fact that $\vct y_1^\transpose C\vct x_1\neq 0$. 
\qed
\end{proof}

We now use the Wedderburn rank reduction formula whenever there exists $\vct x_1\in\Re^n$ such that $C\vct{x}_1 = \vct 1_m$ (equivalently $\vct 1_m\in\colspan{C}$) and/or there exists $\vct y_1\in\Re^m$ such that $C^\transpose \vct y_1 = \vct 1_n$). This operation can reduce the rank by at most 2. We outline this operation next.

For a matrix $D\in\Re^{m\times n}$, recall that $D_{(i)}$ is the $i^{th}$ row (represented as a column vector) and $D^{(j)}$ is the $j^{th}$ column. Let us define a map $\Psi:\Re^{m\times n}\rightarrow \Re^n\times\Re^m\times\{0,1,2\}$ as follows: For a matrix $D\in\Re^{m\times n}$, $\Psi(D)$ is defined as the tuple $({\vct u},{\vct v},l)$, which is defined as
\begin{enumerate}
 \item If $\vct 1_m \not\in\colspan{D}$ and $\vct 1_n \not\in\colspan{D^\transpose}$, then ${\vct u} = \vct 0_n$, ${\vct v} = \vct 0_m$, and $l = 0$.
 \item If $\vct 1_m \in\colspan{D}$, then ${\vct u} = D_{(i)}$ for any $i\in\{1,\ldots,m\}$, ${\vct v} = \vct 0_m$, and $l = 1$.
 \item If $\vct 1_n \in\colspan{D^\transpose}$, then  ${\vct u} = \vct 0_n$, ${\vct v} = D^{(j)}$ for any $j\in\{1,\ldots,n\}$, and $l = 1$.
 \item For the case $\vct 1_m \in\colspan{D}$, $\vct 1_n \in\colspan{D^\transpose}$, we have two possibilities:
 \begin{enumerate}
   \item If the nullspace of the matrix $\begin{bmatrix} D & \vct{1}_m\\ \vct{1}^\transpose_n & 0  \end{bmatrix}$ comprises of vectors of the form $\mat{\vct x\\0}$, then define ${\vct u} = D_{(i)}$ for any $i\in\{1,\ldots,m\}$, ${\vct v} = \vct 0_m$, and $l = 1$.
   \item On the other hand, if there exists a vector of the form $\mat{\vct x\\-1}$ in the nullspace of the matrix $\begin{bmatrix} D & \vct{1}_m\\ \vct{1}^\transpose_n & 0  \end{bmatrix}$, then  ${\vct u} = D_{(i)}$ and ${\vct v} = (D^{(j)} - d_{ij}\vct 1_n)$ for any $i\in\{1,\ldots,m\}$ and $j\in\{1,\ldots,n\}$, and $l = 2$.
 \end{enumerate}
\end{enumerate}
It is clear from the preceding discussions on the Wedderburn rank reduction formula that the following holds.

\begin{lemma}\label{lem:Drankred}
 For any matrix $D$, let $({\vct u},{\vct v},l) = \Psi(D)$. Then, the matrix $\bar D = D - \vct 1_m {\vct u}^\transpose - {\vct v} \vct 1_n^\transpose$ has the rank $(\rank{D}-l)$. Further, $\vct 1_m\not\in\colspan{\bar D}$ and $\vct 1_n\not\in\colspan{\bar D^\transpose}$.
\end{lemma}
\begin{proof}
 A proof is presented in Appendix \ref{app:Drankred}.
\end{proof}

We now define another matrix transformation and two additional lemmas that are necessary when we prove that $(m,n,\hat A,\hat B)$ is a strategically equivalent game of lowest rank in Subsection \ref{sub:MainResult}. Let $M_k$ be a square matrix defined as 
\begin{align}\label{eqn:Mk}
M_k := \left(I_k - \frac{1}{k}\vct 1_{k\times k}\right), 
\end{align}
where $I_k$ is an identity matrix of dimension $k\times k$. We note here that $\vct 1_k$ is in the nullspace of $M_k$.  For a matrix $D\in\Re^{m\times n}$, define $\vct u, \vct v$ and $\tilde D$ as
 \begin{align*}
  \vct u & := \frac{1}{m}D^\transpose \vct 1_m, \; \vct v := \frac{1}{n} M_m D \vct 1_n,\\
  \tilde D & := D - \vct 1_m \vct u^\transpose - \vct v \vct 1_n^\transpose = M_m D M_n.
 \end{align*}
Define $\Xi:\Re^{m\times n}\rightarrow \Re^{m\times n}\times \Re^n\times\Re^m$ as
\begin{align}
 \Xi(D) := (\tilde D, \vct u,\vct v).
\end{align}
We show in the lemma below that the rows of the matrix $\tilde D$ are orthogonal to $\vct 1_n$ and the columns of the matrix $\tilde D$ are orthogonal to $\vct 1_m$.
\begin{lemma}
 For a matrix $D\in\Re^{m\times n}$, let $(\tilde D, \vct u,\vct v):=\Xi(D)$. Then, the following holds: (a) $\vct 1_m^\transpose \tilde D = 0$, $\tilde D\vct 1_n = 0$ and (b) $\vct 1_m\not\in\colspan{\tilde D}$, $\vct 1_n \not \in\colspan{\tilde D^\transpose}$.
\end{lemma}
\begin{proof}
 The proof of part (a) and (b) follows from simple algebraic manipulations and is therefore omitted. Further, since $\vct 1_m$ is orthogonal to every column of $\tilde D$, $\vct 1_m$ cannot be in the column span of $\tilde D$. Analogously, $\vct 1_n \not \in\colspan{\tilde D^\transpose}$. This completes the proof. \qed
\end{proof}

The matrix transformation induced by the map $\Xi$ does not reduce the rank of the matrix beyond what is possible through the map $\Psi$. This result is formally established below.

\begin{lemma}\label{lem:Dranksame}
Let $D\in\Re^{m\times n}$ be a matrix.  Let $({\vct u},{\vct v},l) = \Psi(D)$ and define $\bar D = D - \vct 1_m {\vct u}^\transpose - {\vct v} \vct 1_n^\transpose$. Define $(\tilde D, \tilde{\vct u},\tilde{\vct v}):=\Xi(\bar D)$. Then, $\tilde D = M_m D M_n$ and $\rank{\tilde D} = \rank{\bar D} = \rank{D} - l$.
\end{lemma}
\begin{proof}
We have $\tilde D = M_m\left(D - \vct 1_m {\vct u}^\transpose - {\vct v} \vct 1_n^\transpose\right)M_n =  M_m D M_n$. Let $k$ denote the rank of $\bar D$. Recall from Theorem \ref{thm:wedderburn} part 3 that $\vct 1_m\not\in\colspan{\bar D}$ and $\vct 1_n\not\in\colspan{\bar D^\transpose}$. Since $\vct 1_m\not\in\colspan{\bar D}$ and the nullspace of $M_m$ is of the form $\xi\vct 1_m$, $M_m \bar D$ is a rank $k$ matrix. Further, since $\vct 1_n\not\in\colspan{\bar D^\transpose}$ and $\colspan{\bar D^\transpose M_m}\subset \colspan{\bar D^\transpose}$, we conclude that $\vct 1_n\not\in\colspan{\bar D^\transpose M_m}$. Since $\vct 1_n\not\in\colspan{\bar D^\transpose M_m}$ and the nullspace of $M_n$ is of the form $\xi \vct 1_n$, we have $M_n\bar D^\transpose M_m$ is also a rank $k$ matrix.
\qed
\end{proof}

We next proceed to a discussion on the matrix pencil theory and identify how to apply it to reduce the rank of the game.

\subsection{Matrix Pencils}\label{subsec:matrixPencils}
Let $A,B\in\Re^{m\times n}$ be matrices of known values, and let $\lambda$ represent an unknown parameter. Then the set of all matrices of the form $A+\lambda B$, with $\lambda\in\Ce$, define a \textit{linear matrix pencil} (or just a \textit{pencil})\citep[p.~24]{ikramov1993matrix,gantmakher1959theory}. The literature defines both the set of matrices $A+\lambda B$ and $A-\lambda B$ as pencils. Although $A-\lambda B$ seems to be more common (possibly due to the connection to the standard presentation of the eigenvalue problem $A-\lambda I$), we choose to use $A+\lambda B$ as it more closely aligns with the problem presented in \eqref{eqn:probFor}.

While not as well studied as the standard eigenproblem, $A-\lambda I$, the theory of pencils still enjoys a rich history. For the square, nonsingular, $m\times m$ case, Weierstrass investigated pencils and developed a canonical form as early as 1867. The rectangular case was later solved, with a canonical form presented, by Kronecker in 1890. His canonical form, aptly named the Kronecker Canonical Form (KCF), was popularized by Gantmacher in chapter XII of his two volume treatise on the theory of matrices \cite[Ch.~12]{gantmakher1959theory}. For a (not so short) survey on pencils, we refer the reader to \citep{ikramov1993matrix}. For a discussion of the various canonical forms and computational methods, for the square case see \cite[Ch.~7.7]{golub2012matrix} and, for the more general singular/rectangular case, see \citep{demmel1993generalized}.

In a series of papers, \cite{weil1968game,thompson1972roots}, the authors study the relationship between the eigenvalues and eigenvectors of matrix pencils and the solution of a zero-sum game, where the game is formulated as $(m,n,A-\lambda B,-A+\lambda B)$. The square, traditional eigenvalue problem with $(m,m,A-\lambda I,-A+\lambda I)$ is studied in \cite{weil1968game}. The authors present the rectangular matrix pencil version in \cite{thompson1972roots}. Their results are indeed theoretically interesting; however, as \cite{weil1968game} states, the relationship between eigensystems and game theory is ``tenuous". This seems to make, at least in accordance with the current theoretical results, the study of eigensystems ill-suited as a solution concept for bimatrix games. In contrast, as we will show in this subsection, the matrix pencil problem is well-suited to the study of strategically equivalent games. 

In the remainder of this section, we review the canonical form of a matrix pencil presented by  \cite{thompson1970reducing,thompson1972roots}. Although not a common terminology in the literature, we'll refer to this canonical form as the Thompson-Weil Canonical Form (TWCF). Our motivation for studying the TWCF is two-fold.  First off, the TWCF focuses on only computing those eigenvalues, if they exist, that strictly reduce the rank of the pencil $A+\lambda B$. Other extraneous values, such as those computed in the KCF, are ignored. Secondly, we wish to bring renewed emphasis on existing results that connect the study of matrix pencils to game theory.

For completeness, we now restate some results from \citep{thompson1970reducing,thompson1972roots,dell1971algorithm}. Following that, we show how to apply those results to calculate an equivalent game of lower rank. 
\smallskip

For the following discussions, let $\rank{A}=p$ and $\rank{B}=r$.

\begin{definition}[{\citet[Definition~2.1]{thompson1972roots}}]\label{def:pencilSolution}
    By a solution to the pencil $A+\lambda B$, we shall mean a triple $(\lambda,\vct{x},\vct{y})\in\Ce\times\Re^n\times\Re^m$, satisfying $\vct{x}\neq\vct{0}$ and $\vct{y}\neq\vct{0}$, that solve the set of equations
    \begin{align*}
        (A+\lambda B)\vct{x}=\vct{0},\qquad  \vct{y}^\transpose(A+\lambda B)=\vct{0},
    \end{align*}
    and have the property that $\rank{A+\lambda B}<\rank{A+\mu B}$ for any $\mu$ that is not an element of the solution triple.
\end{definition}
Throughout their series of works on matrix pencils,  \cite{thompson1970reducing,thompson1972roots,dell1971algorithm} refer to the solution triple in Definition \ref{def:pencilSolution} by various names such as \textit{pencil value}, \textit{pencil roots}, \textit{rank-reducing numbers}, \textit{left pencil-vector}, and \textit{right pencil-vector}. For simplicity, we choose to use the terms \textit{eigenvalue} and \textit{left\textbackslash right eigenvector}. We will also use the set $\Lambda(A,B)$ to represent all $\lambda$ that are in the solution triple as defined in Definition \ref{def:pencilSolution}. This set $\Lambda(A,B)$ exactly corresponds to the eigenvalues of the Jordan block in KCF form.    

\begin{lemma}[{\citet[Lemma~1]{thompson1970reducing}}]\label{lem:TW1}
For all $A,B\in\Re^{m\times n}$, there exists nonsingular $S_1$ and $T_1$ such that $S_1(A+\lambda B)T_1$ can be partitioned into:
\begin{equation}\label{eq:lem:TW1}
\kbordermatrix{  &r       &t        &q\\
                r&E_{11}  &  E_{12} & 0 & 0\\
                s& E_{21} & 0       & 0 & 0\\
                q& 0      & 0       & I_q & 0\\
                & 0       & 0       & 0   & 0} \\
                +\lambda
 \kbordermatrix{&       \\
                &I_r  &  0 & 0 & 0\\
                & 0   & 0  & 0 & 0\\
                & 0   & 0  & 0 & 0\\
                & 0   & 0  & 0 & 0} \\
\end{equation}
where $q\leq\min\{m-r,n-r,p\}$, $E_{12}$ is in column echelon form, $\rank{E_{12}}=t$, $E_{21}$ is in row echelon form, and $\rank{E_{21}}=s$. Any of $s,t,q$ may be zero, $r+s+q\leq m$, and $r+t+q\leq n$. 
\end{lemma}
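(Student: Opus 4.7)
I plan to construct $S_1$ and $T_1$ as a product of several nonsingular factors, each chosen so that it further simplifies $A$ while keeping (or, in the very first step, attaining) the target form of the $\lambda$-coefficient. Since $\rank{B}=r$, standard rank-revealing row and column operations give nonsingular $P_0 \in \Re^{m \times m}$ and $Q_0 \in \Re^{n \times n}$ with $P_0 B Q_0 = \left(\begin{smallmatrix} I_r & 0 \\ 0 & 0 \end{smallmatrix}\right)$. Applying $P_0$ and $Q_0$ to the pencil and conformally partitioning $P_0 A Q_0 = \left(\begin{smallmatrix} A_{11} & A_{12} \\ A_{21} & A_{22} \end{smallmatrix}\right)$ with $A_{11}$ of size $r \times r$ already places the $\lambda$-term in the form demanded by \eqref{eq:lem:TW1}; the rest of the construction must preserve it exactly.

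A short calculation reveals that any subsequent nonsingular pair $(S,T)$ with $SBT=B$ must be block-triangular of the form $S = \left(\begin{smallmatrix} S_{11} & S_{12} \\ 0 & S_{22} \end{smallmatrix}\right)$, $T = \left(\begin{smallmatrix} T_{11} & 0 \\ T_{21} & T_{22} \end{smallmatrix}\right)$ with $S_{11} T_{11} = I_r$, and this rigid class will dictate everything that follows. Within it, the diagonal pair $\mathrm{diag}(I_r, P_1)$ on the left and $\mathrm{diag}(I_r, Q_1)$ on the right acts only on $A_{22}$, so I can choose them to reduce $A_{22}$ to $\left(\begin{smallmatrix} I_q & 0 \\ 0 & 0 \end{smallmatrix}\right)$, where $q := \rank{A_{22}}$. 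The bound $q \leq \min\{m-r, n-r\}$ is immediate from the size of $A_{22}$, while $q \leq p$ holds because $A_{22}$ is a submatrix of the rank-$p$ matrix $P_0 A Q_0$.

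Repartitioning the rows as $(r, q, m-r-q)$ and the columns as $(r, q, n-r-q)$, the pencil now reads $\left(\begin{smallmatrix} \hat A_{11} & A_{12}^{(1)} & A_{12}^{(2)} \\ A_{21}^{(1)} & I_q & 0 \\ A_{21}^{(2)} & 0 & 0 \end{smallmatrix}\right) + \lambda\,\mathrm{diag}(I_r,0,0)$. I will use the $I_q$ block as a pivot: left multiplication by the unit upper block-triangular matrix carrying $-A_{12}^{(1)}$ in position $(1,2)$ kills $A_{12}^{(1)}$, the symmetric right multiplication kills $A_{21}^{(1)}$, and both factors lie in the allowed class because their off-diagonal entries only perturb zero blocks of $B$; the $(1,1)$ block is updated to the Schur complement $E_{11} := \hat A_{11} - A_{12}^{(1)} A_{21}^{(1)}$. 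Isolated column operations on the third column block then reduce $A_{12}^{(2)}$ to $(E_{12}\;\;0)$ with $E_{12}$ in column echelon form of rank $t$, and analogously row operations on the third row block reduce $A_{21}^{(2)}$ to $\left(\begin{smallmatrix} E_{21} \\ 0 \end{smallmatrix}\right)$ with $E_{21}$ in row echelon form of rank $s$; a final permutation swapping row block $2$ (size $q$) with row block $3$ (size $s$), and column block $2$ (size $q$) with column block $3$ (size $t$), yields exactly the form of \eqref{eq:lem:TW1}, with the bounds $r+s+q\leq m$ and $r+t+q\leq n$ dropping out of the block-size bookkeeping. The main obstacle throughout is precisely the rigidity of the allowed class: once $B$ is normalized, one is forced to use the $I_q$ pivot rather than $A_{11}$ for the off-diagonal eliminations, and this is exactly the reason $A_{11}$ survives (as its Schur complement) as the uncleaned block $E_{11}$.
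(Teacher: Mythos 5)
Your construction is correct, and it checks out at every step: the stabilizer of $\mathrm{diag}(I_r,0)$ is exactly the block-triangular class you describe (from $S_{11}T_{11}=I_r$ forcing $S_{21}=0$, $T_{12}=0$), each subsequent factor you use lies in that class, and the Schur-complement pivot on $I_q$ plus the echelon reductions and final permutation land precisely on \eqref{eq:lem:TW1} with the stated bounds on $q$, $s$, $t$. Note that the paper offers no proof of its own here --- it simply cites Thompson and Weil --- and your normalize-$B$-first, reduce-$A$-within-the-stabilizer argument is essentially the standard one from that source, so there is nothing further to reconcile.
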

\begin{proof}
    See \citep[Lemma~1]{thompson1970reducing} or \citep[Theorem~2.1]{thompson1972roots}.
\end{proof}

\begin{lemma}[{\citet[Lemma~2]{thompson1970reducing}}]\label{lem:TW2}
If $s+t>0$ in Lemma \ref{lem:TW1}, there exists nonsingular $S_2$ and $T_2$ such that $S_2(A+\lambda B)T_2$ can be partitioned into: 
\begin{equation}\label{eq:lem:TW2}
\kbordermatrix{  &r-s     &s    &t        &q\\
              r-t&C_{11}  &  0 & 0 & 0 &0\\
                t& 0      & 0  & I_t &0 &0\\       
                s& 0      & I_s& 0   &0 & 0\\
                q& 0      & 0  & 0   & I_q & 0\\
                & 0       & 0  &0    & 0   & 0} \\
                +\lambda
 \kbordermatrix{&       \\
                &D_{11}  & D_{12} & 0 & 0 &0\\
                &D_{21}  & D_{22}  & 0 & 0&0\\
                & 0   & 0  & 0 & 0 &0\\
                & 0   & 0  & 0 & 0 &0\\
                & 0   & 0  & 0 & 0 &0} \\
\end{equation}
\end{lemma}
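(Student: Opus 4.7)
My plan is to produce $S_2$ and $T_2$ as compositions of four Gauss--Jordan-style sweeps applied to the form~\eqref{eq:lem:TW1}. The guiding observation is that in~\eqref{eq:lem:TW1} the pencil $\lambda B$ is supported only in the top-left $r\times r$ block (where it equals $\lambda I_r$); every other block is zero. Consequently, any row operation confined to rows $r+1,\dots,r+s$ and any column operation confined to columns $r+1,\dots,r+t$ leaves $\lambda B$ entirely untouched, and this is the flexibility I will exploit to clean out $E_{11}$ without polluting $\lambda B$.

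First I would collapse the echelon blocks of $A$ into identity pivots. Since $E_{12}$ is in column echelon form with $\rank{E_{12}}=t$, there exists a nonsingular $P\in\Rea^{r\times r}$ such that $PE_{12}=\bigl(\begin{smallmatrix}0\\ I_t\end{smallmatrix}\bigr)$; applying $P$ to the top $r$ rows realises this. Symmetrically, the row echelon form of $E_{21}$ with $\rank{E_{21}}=s$ yields a nonsingular $Q\in\Rea^{r\times r}$ with $E_{21}Q=(\,0,\;I_s\,)$, applied to the left $r$ columns. These transformations replace $E_{11}$ by $PE_{11}Q$, which I partition as $\bigl(\begin{smallmatrix}X_{11}&X_{12}\\ X_{21}&X_{22}\end{smallmatrix}\bigr)$ with row sizes $(r-t,t)$ and column sizes $(r-s,s)$, and they turn the $I_r$ block of $\lambda B$ into the nonsingular matrix $PQ$, split compatibly as $\bigl(\begin{smallmatrix}D_{11}&D_{12}\\ D_{21}&D_{22}\end{smallmatrix}\bigr)$; the remaining blocks of $\lambda B$ are unaffected because $P$ touches only the top $r$ rows and $Q$ only the left $r$ columns.

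Next I would use the freshly created pivots to erase the rest of $PE_{11}Q$. The $I_t$ block sitting in columns $r+1,\dots,r+t$ lets me subtract suitable multiples of those columns from the first $r$ columns so as to annihilate $X_{21}$ and $X_{22}$, while the $I_s$ block sitting in rows $r+1,\dots,r+s$ lets me subtract suitable multiples of those rows from the first $r-t$ rows so as to annihilate $X_{12}$. Both sweeps operate exactly in the window where $\lambda B$ vanishes, so $D$ is preserved. What remains of $PE_{11}Q$ is $X_{11}$, which I rename $C_{11}$, and the pencil is then in the form~\eqref{eq:lem:TW2}.

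The main point requiring care, and the step I expect to demand the most attention, is checking that each sweep preserves the structure built up by its predecessors. For the column sweep clearing $X_{21},X_{22}$ this amounts to noting that columns $r+1,\dots,r+t$ of $A$ are nonzero only in rows $r-t+1,\dots,r$, so $E_{21}$, the $I_q$ block, and the new $I_t$ itself are all left untouched. An analogous argument for the row sweep clearing $X_{12}$ uses that rows $r+1,\dots,r+s$ of $A$ are nonzero only in columns $r-s+1,\dots,r$, leaving $E_{12}$, $I_q$, and $X_{11}$ intact. Collecting the row transformations into $S_2$ and the column transformations into $T_2$ then produces the desired nonsingular matrices realising~\eqref{eq:lem:TW2}.
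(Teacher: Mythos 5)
Your argument is correct: reducing $E_{12}$ and $E_{21}$ to $\bigl(\begin{smallmatrix}0\\ I_t\end{smallmatrix}\bigr)$ and $(0,\ I_s)$ by nonsingular transformations of the first $r$ rows and columns (turning $\lambda I_r$ into $\lambda PQ$, which is absorbed into the $D_{ij}$ blocks), then sweeping out $X_{12},X_{21},X_{22}$ using columns $r{+}1,\dots,r{+}t$ and rows $r{+}1,\dots,r{+}s$ — where the $\lambda$-part vanishes — is exactly the elementary-operations proof, and your block-by-block checks that each sweep preserves the previously established structure are the right ones. The paper itself only cites Thompson--Weil for this lemma, and your reconstruction is essentially the argument given there, so there is nothing further to add.
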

\begin{proof}
 See \cite[Lemma~2]{thompson1970reducing}.
\end{proof}
Furthermore, by repeated applications of Lemmas \ref{lem:TW1} and \ref{lem:TW2}, the authors define an iterative algorithm that solves for the set $\Lambda(A,B)$, including identifying if $\Lambda(A,B)=\emptyset$. We briefly outline the algorithm in Algorithm \ref{alg:TWCF}. For the full proof and implementation details, we refer the reader to \citep{thompson1970reducing,dell1971algorithm}.
\begin{remark}
As the authors note in \cite{dell1971algorithm}, their algorithm may be numerically unsound for ill-conditioned problems. Indeed, while there does not appear to be any results in the literature comparing the numerical stability of the TWCF algorithm and Gantmacher's method for computing the KCF, it seems likely that both methods may share similar numerical difficulties. Therefore, other algorithms may be better suited for ill-conditioned problems. For the square, dense matrix pencil, (whether ill-conditioned or not) the famous QZ algorithm of \cite{moler1973algorithm} is likely a better option. One can also see \cite[Ch.~7.7]{golub2012matrix} for a detailed description of the QZ algorithm. For the rectangular case, numerical accuracy for ill-conditioned problems can likely be improved via the GUTPRI algorithm developed by \cite{demmel1993generalized}. However, even in light of this discussion, we choose to explore the TWCF as it provides insight into the mathematical structure of the pencil that applies to our problem at hand.  
\end{remark}

\begin{algorithm}[tbh]
	\caption{Algorithm for computing the eigenvalues of a pencil using the Thompson-Weil Canonical Form}
	\label{alg:TWCF}
	\begin{algorithmic}[1] 
		\Function{TWCF}{$A,B$}
        \State $i\gets1$
        \State $A_1\gets A,B_1\gets B$
        \State flag$\gets$True
		\While{flag}
		\State Calculate $E_{11,i},I_{r_i},r_i,s_i,t_i,q_i$ via \eqref{eq:lem:TW1} 
		\If{$r_i=s_i\lor r_i=t_i$}
		    \State $\Lambda(A,B)\gets\emptyset$
		    \State flag$\gets$False
		 \ElsIf{$s_i+t_i=0$}
		     \State $\Lambda(A,B)\gets\Lambda(E_{11,i},I_{r_i})$
		     \State  flag$\gets$False
		  \Else
		      \State Calculate $C_{11,i},D_{11,i}$ via \eqref{eq:lem:TW2} 
		      \State $A_i\gets C_{11,i},B_i\gets D_{11,i}$ 
		\EndIf
		\EndWhile
        \State \textbf{return} ($\Lambda(A,B)$)
        \EndFunction
	\end{algorithmic}
\end{algorithm}
    
We now state one final definition before proceeding to state the main theorem from \cite{thompson1970reducing}, which we use to prove our next result.

\begin{definition}\label{def:mLambda}
    For any $\lambda\in\Ce$, define the geometric multiplicity of $\lambda$ as:
    \begin{enumerate}
        \item $\m{\lambda}=0$ if $\lambda\notin\Lambda(A,B)$,
        \item $\m{\lambda}=$ the number of Jordan blocks containing $\lambda$ in the Jordan normal form of $E_{11,i}$ where $i$ is the smallest integer such that $s_i+t_i=0$.
    \end{enumerate}
\end{definition}

\begin{theorem}[{\citet[Theorem~1]{thompson1970reducing}}]\label{thm:TWCFthm1}
    For any complex number $\lambda$,
    \begin{equation*}
        \rank{A+\lambda B}=r+q-\m{\lambda}
    \end{equation*}
    where $r$ and $q$ are defined in Lemma \ref{lem:TW1} and $\m{\lambda}$ is as defined in Definition \ref{def:mLambda}.
\end{theorem}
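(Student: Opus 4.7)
The plan is to prove Theorem \ref{thm:TWCFthm1} by strong induction on the dimension of the pencil, equivalently on the depth of recursion of Algorithm \ref{alg:TWCF}. Since every transformation in Lemmas \ref{lem:TW1} and \ref{lem:TW2} is carried out via nonsingular matrices, the rank of $A + \lambda B$ is invariant under them, so the task reduces to computing the rank of the appropriate canonical form directly.

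For the base case, suppose Algorithm \ref{alg:TWCF} terminates at the first iteration, so that $s_1 + t_1 = 0$. Then the blocks $E_{12}$ and $E_{21}$ in \eqref{eq:lem:TW1} are absent and the transformed pencil becomes block-diagonal, with blocks $E_{11} + \lambda I_r$, $I_q$, and zero. Rank invariance under $S_1$ and $T_1$ yields $\rank{A + \lambda B} = \rank{E_{11} + \lambda I_r} + q$. Placing $E_{11}$ in Jordan normal form, each Jordan block whose eigenvalue corresponds to $\lambda$ contributes exactly one unit of rank deficit, so $\rank{E_{11} + \lambda I_r} = r - \m{\lambda}$ by Definition \ref{def:mLambda}, proving the formula.

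For the inductive step, suppose $s_1 + t_1 > 0$. First apply Lemma \ref{lem:TW2} via nonsingular $S_2, T_2$ to reach the form \eqref{eq:lem:TW2}. Then exploit the identity blocks $I_s$, $I_t$, and $I_q$ (which sit in the constant part with zero coefficient on $\lambda$) to perform additional rank-preserving row and column operations that clear the $\lambda$-coefficient blocks $D_{12}, D_{21}, D_{22}$. This leaves a block-diagonal pencil whose rank decomposes as $\rank{A + \lambda B} = \rank{C_{11}' + \lambda D_{11}'} + s + t + q$, where $C_{11}' + \lambda D_{11}'$ is a reduced pencil of strictly smaller size. By the inductive hypothesis applied to the reduced pencil, its rank is given by the analogous formula. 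Unwinding the recursion accumulates the parameters $(r_i, s_i, t_i, q_i)$ across iterations, and since $\lambda(A, B)$ and hence $\m{\lambda}$ is invariant under nonsingular transformations, the cumulative computation matches the claim.

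The hard part will be the bookkeeping: one must verify that the accumulated parameters across iterations combine to give precisely the first-iteration $(r, q)$ appearing in the theorem statement, and that the row/column operations eliminating $D_{12}, D_{21}, D_{22}$ do not alter the eigenvalue structure of the reduced pencil. A clean way to handle both is to iterate Lemmas \ref{lem:TW1} and \ref{lem:TW2} until termination, thereby placing $A + \lambda B$ into a global canonical form in which the only $\lambda$-dependence lives in the terminal $E_{11, N}$, whose rank deficiency at $\lambda$ is exactly $\m{\lambda}$ by Definition \ref{def:mLambda}.
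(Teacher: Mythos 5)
First, a point of reference: the paper does not prove this theorem at all --- it is transcribed from Thompson and Weil with the proof deferred to the citation --- so there is no in-paper argument to compare yours against, and your attempt has to stand on its own. The architecture you propose is the right one: induct on the recursion depth of Algorithm \ref{alg:TWCF}, use nonsingularity of the $S_i,T_i$ to preserve rank, handle the terminating case $s+t=0$ by a direct computation on $E_{11}+\lambda I_r$, and in the non-terminating case use the $I_s,I_t,I_q$ blocks of \eqref{eq:lem:TW2} to decouple the smaller pencil. The decoupling step is sound: the ($\lambda$-dependent, but for each fixed $\lambda$ invertible) row and column operations that clear $D_{12},D_{21},D_{22}$ act through blocks whose only nonzero entry is the relevant identity block, so they leave $C_{11}+\lambda D_{11}$ untouched and give $\rank{A+\lambda B}=\rank{C_{11}+\lambda D_{11}}+s+t+q$.

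The genuine gap is the step you label ``bookkeeping'' and then defer. Unwinding your recursion yields $\rank{A+\lambda B}=r_N+q_N-\m{\lambda}+\sum_{i=1}^{N-1}(s_i+t_i+q_i)$, where $N$ is the terminating iteration, whereas the theorem asserts $r_1+q_1-\m{\lambda}$ with $r_1,q_1$ from the \emph{first} application of Lemma \ref{lem:TW1}. Closing the induction therefore requires the telescoping identity $r_i=r_{i+1}+q_{i+1}+s_i+t_i$ at every level, and this is not a formality --- it is equivalent to the assertion that $r+q$ from the first iteration already equals the maximal rank of the pencil, which is essentially the content of the theorem. Note in particular that Lemma \ref{lem:TW1} by itself does not obviously bound $\rank{A+\lambda B}$ above by $r+q$: the block $\left[\begin{smallmatrix} E_{11}+\lambda I_r & E_{12}\\ E_{21} & 0\end{smallmatrix}\right]$ could a priori have rank exceeding $r$ (via the Schur complement $E_{21}(E_{11}+\lambda I_r)^{-1}E_{12}$), so the identity must be extracted from the internal structure of the constructions in Lemmas \ref{lem:TW1} and \ref{lem:TW2} --- for instance from the fact that the $\lambda$-coefficient of the form \eqref{eq:lem:TW2} still has rank $r_i$, which forces the $r_i\times r_i$ block $[D_{jk}]$ to be nonsingular and constrains $\rank{D_{11}}$ and the next iteration's parameters. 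Your closing suggestion of passing to a ``global canonical form'' restates the problem rather than solving it, since the same accumulated sum reappears there. Until this identity is proved, you have a rank formula in the accumulated parameters, not the one claimed. A minor additional point: in your base case, $E_{11}+\lambda I_r$ loses rank when $-\lambda$ is an eigenvalue of $E_{11}$, so the Jordan-block count in Definition \ref{def:mLambda} must be read with the sign convention matching the pencil $A+\lambda B$; the paper is silent on this, but a self-contained proof should pin the convention down.
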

\begin{proof}
    See \cite[Theorem~1]{thompson1970reducing}.
\end{proof}

As we are only concerned with the real, strictly positive eigenvalues, let us further define the restricted set of eigenvalues as:
\begin{equation*}
    \Lambda_{>0}(A,B)=\{\lambda\in \Re_{>0}\vert \lambda\in\Lambda(A,B)\} = \Lambda(A,B)\cap (0,\infty).
\end{equation*}
Let $\gamma^*(A,B)$ be defined as 
\begin{align}\label{eqn:gammastar}
 \gamma^*(A,B) = 
 \begin{cases}
        \argmax_{\lambda\in\Lambda_{>0}(A,B)} \m{\lambda} & \text{if } \Lambda_{>0}(A,B)\neq \emptyset\\
        1 & \text{otherwise}.
 \end{cases}
\end{align}

\begin{lemma}\label{lem:ABpencil}
 For matrix $(A,B)$, let $\gamma^*:=\gamma^*(A,B)$. Then, $\rank{A+\gamma^* B} \leq \rank{A+\lambda B}$ for all $\lambda >0$. Further, $\rank{A+\gamma^*B} =  \rank{A+ \lambda B} +\m{\lambda}-\m{\gamma^*}$ for all $\lambda>0$.
\end{lemma}
\begin{proof}
    Clearly, if $\lambda_{>0}(A,B)=\emptyset$, then for all $\lambda\in\Re_{>0}$ we have $\rank{A+\lambda B}=r+q$ by Theorem \ref{thm:TWCFthm1}. Thus, the statement holds trivially.
    
    Let us suppose that $\lambda_{>0}(A,B)\neq\emptyset$. Since $\m{\gamma^*}\geq \m{\lambda}$ for all $\lambda>0$, we get
    \begin{equation*}
        \rank{A+\gamma^*B}=r+q-\m{\gamma^*}\leq r+q-\m{\lambda}=\rank{A+\lambda B}.
    \end{equation*}
    This yields the result.
    \qed
\end{proof}

\subsection{The Algorithm for Rank Reduction and the Main Result}\label{sub:MainResult}
Finally, we now state our main result. Consider the game $(m,n,\tilde{A},\tilde{B})$ with $\rank{\tilde{A}+\tilde{B}}=\tilde{k}\geq1$. Define the rank reduction map $\Pi:\Re^{m\times n}\times \Re^{m\times n}\to \Re^{m\times n}\times \Re^{m\times n}$ by employing the following five step process. 
\begin{enumerate}
 \item Let $(\tilde{\vct u}_{\tilde A},\tilde{\vct v}_{\tilde A},l_{\tilde A}):=\Psi(\tilde A)$ and $(\tilde{\vct u}_{\tilde B},\tilde{\vct v}_{\tilde B},l_{\tilde B}):=\Psi(\tilde B)$. 
 \item Compute $(A^\dagger,\vct u^\dagger_A,\vct v^\dagger_A) := \Xi(\tilde A)$ and $(B^\dagger,\vct u^\dagger_B,\vct v^\dagger_B) := \Xi(\tilde B)$.
\item Compute $\gamma^*:=\gamma^*(A^\dagger,B^\dagger)$. Define $\hat{\vct u}$, $\hat{\vct v}$, and $\hat l$ as
\begin{align*}
 \hat{\vct u}:=\tilde{\vct u}_{\tilde A}+\gamma^*\tilde{\vct u}_{\tilde B} 
 \quad \hat{\vct v}:=\tilde{\vct v}_{\tilde A}+\gamma^*\tilde{\vct v}_{\tilde B}, 
 \quad \hat l:=\begin{cases}
     0 & \text{if } \hat{\vct u} = \vct 0_n, \hat{\vct v} =\vct 0_m\\
     2 & \text{if } \hat{\vct u} \neq \vct 0_n, \hat{\vct v} \neq \vct 0_m\\
     1 & \text{otherwise}
    \end{cases}.
\end{align*}
\item Determine $\check A := \tilde A - \vct{1}_m\hat{\vct{u}}^\transpose$, $\check B := \gamma^*\tilde B -\hat{\vct{v}}\vct{1}_n^\transpose$, and $\check C := \check A+ \check B$. Let $(\check{\vct u},\check{\vct v},\check l) := \Psi(\check C)$.
\item Define $\Pi\Big(\tilde A,\tilde B\Big) := \Big(\hat A,\hat B\Big)$, where $\hat A = \check A - \vct{1}_m\check{\vct{u}}^\transpose$ and $\hat B = \check B -\check{\vct{v}}\vct{1}_n^\transpose$. 
\end{enumerate}

\begin{theorem}\label{thm:theMainResult}
The game $(m,n,\hat{A},\hat{B})$ is strategically equivalent to the game $(m,n,\tilde{A},\tilde{B})$ with 
\begin{align}\label{eqn:truerankab}
\rank{\hat{A}+\hat{B}}=\rank{\tilde{A}+\tilde{B}} +\m{1}-\m{\gamma^*}-\hat l - \check l.
\end{align}
Further, $(\hat A,\hat B)$ solves \eqref{eqn:probFor}.
\end{theorem}
\begin{proof}
 The strategic equivalence of games $(m,n,\hat{A},\hat{B})$ and $(m,n,\tilde{A},\tilde{B})$ follows from Lemma \ref{lem:stratEqVec}. Equation \ref{eqn:truerankab} follows from the Wedderburn rank reduction formula in Theorem \ref{thm:wedderburn}, Lemma \ref{lem:Drankred}, and Lemma \ref{lem:ABpencil}. We only need to show that $(\hat A,\hat B)$ solves \eqref{eqn:probFor}.
 
 In Step 3, we have $\rank{A^\dagger+\gamma^* B^\dagger} \leq \rank{A^\dagger+\lambda B^\dagger}$ for all $\lambda>0$. Lemma \ref{lem:Dranksame} implies that 
 \begin{align*}
  \rank{\hat A+\hat B} = \rank{M_m (\hat A+\hat B) M_n}= \rank{A^\dagger+\gamma^* B^\dagger},
 \end{align*}
 where $M_k$ is defined in \eqref{eqn:Mk}. Thus, $(m,n,\hat A,\hat B)$ is a strategically equivalent game of lowest rank.\qed
\end{proof}

We now present some examples where we show that each step of the above algorithm is important in the rank reduction process of the game. If we miss any of the steps, then the algorithm may not be able to identify the game of lowest rank. For the following examples, pick $\vct r,\vct v\in\Re^m$ such that $\vct r,\vct v\neq \vct 1_m$ and $\vct r\neq \vct v$. Similarly, pick $\vct c,\vct u\in\Re^n$ such that $\vct c,\vct u\neq \vct 1_n$ and $\vct c\neq \vct u$. Let $D := \vct 1_m \vct u^\transpose + \vct v\vct 1_n^\transpose$.

\begin{example}
Consider the game with payoff matrices $\tilde A = 2\vct r \vct c^\transpose +D$, $\tilde B = -\vct r \vct c^\transpose -D$. Here, a direct application of matrix pencil theory will yield $\{1,1,2\}\subset\Lambda_{>0}(\tilde A,\tilde B)$. Thus, if we apply the rank reduction formula without Step 1, then $\check A = 2\vct r \vct c^\transpose,\check B = -\vct r \vct c^\transpose$, which leads to a rank-1 game. However, if we apply Step 1, then $\bar A = 2\vct r \vct c^\transpose, \bar B = -\vct r \vct c^\transpose$, which will eventually lead to a strategically equivalent rank-0 game.
\end{example}

\begin{example}
Consider the game with payoff matrices $\tilde A = 2\vct r (\vct c+\vct 1_n)^\transpose$, $\tilde B = -\vct r \vct c^\transpose$, which is a rank 1 game. Then, after Step 1, we get $\bar A = \tilde A$ and $\bar B = \tilde B$. If we skip Step 2 and directly jump to Step 3, then one can have $2\not\in\Lambda_{>0}(\bar A,\bar B)$. Step 2 of the algorithm is needed so that we have $2\in\Lambda_{>0}(A^\dagger,B^\dagger)$. This leads to a strategically equivalent rank 0 game. 
\end{example}

\begin{example}
Assume that $\vct r\perp \vct 1_m$ and $\vct c\perp \vct 1_n$. Consider the game with payoff matrices $\tilde A = 2\vct r (\vct c+\vct 1_n)^\transpose + \vct 1_m\vct u^\transpose$, $\tilde B = -(\vct r+\vct 1_m) \vct c^\transpose + \vct v\vct 1_n^\transpose $, which is a rank 3 game. The output of each step of the algorithm is as follows:
\begin{enumerate}
 \item After Step 1, we get $\bar A = 2\vct r (\vct c+\vct 1_n)^\transpose$ and $\bar B = -(\vct r+\vct 1_m) \vct c^\transpose $.
 \item After Step 2, we get $A^\dagger = 2\vct r \vct c^\transpose$ and $B^\dagger = -\vct r\vct c^\transpose$.
 \item After Step 3, we get $\gamma^\star = 2$.
 \item After Step 4, we get $\check A = 2\vct r \vct c^\transpose$ and $\check B = -2\vct r\vct c^\transpose$.
\end{enumerate}
Thus, we arrive at a strategically equivalent zero-sum game.
\end{example}

We further note here that the reason for taking the five step process is to determine the rank of the game in \eqref{eqn:truerankab}, which requires the application of the Wedderburn rank reduction formula to determine $\hat l$ and $\check l$. If we are not interested in determining the rank of the game, then we can follow the following three step process:
\begin{enumerate}
\item Compute $(A^\dagger,\vct u^\dagger_A,\vct v^\dagger_A) = \Xi(\tilde A)$ and $(B^\dagger,\vct u^\dagger_B,\vct v^\dagger_B) = \Xi(\tilde B)$.
\item Compute $\gamma^*:=\gamma^*(A^\dagger,B^\dagger)$. Define $\hat{\vct u}$ and $\hat{\vct v}$ as
\begin{align*}
 \hat{\vct u}:=\vct u_A^\dagger+\gamma^*\vct u^\dagger_B 
 \quad \hat{\vct v}:=\vct v_A^\dagger+\gamma^*\vct v^\dagger_B.
\end{align*}
\item Output $\Pi^\dagger\Big(\tilde A,\tilde B\Big) := \Big(\hat A,\hat B\Big)$, where $\hat A = \tilde A - \vct{1}_m\hat{\vct{u}}^\transpose$ and $\hat B = \gamma^*\tilde B -\hat{\vct{v}}\vct{1}_n^\transpose$.
\end{enumerate}
This map $\Pi^\dagger$ outputs a game of the same rank as that of the rank reduction map $\Pi$, but in this case we cannot compute the rank of the game $(\hat A,\hat B)$ using the expression in \eqref{eqn:truerankab} since $\hat l$ and $\check l$ are unknown.

\section{A Fast Algorithm for Solving Rank-1 Matrix Pencils}\label{sec:rank1Pencils} 

With the recent publication of a polynomial time algorithm for solving rank-1 games in \cite{adsul2011rank}, it seems interesting to ask the question: given game $(m,n,\tilde{A},\tilde{B})$, does there exist $(\hat{A},\hat{B})\in \Upsilon(\tilde A,\tilde B)$ such that $\rank{\hat{A}+\hat{B}}=1$? Our five step process in Subsection \ref{sub:MainResult} could answer this question. However, as we will show in Subsection \ref{subsec:algorithm}, solving the matrix pencil problem, for $m\leq n$, has worst case complexity of $\bigO(n^4)$ operations.   

Therefore, in this section, we develop a series of results that allow us to determine whether or not there exists a $\lambda^*\in\Ce$ such that $\rank{A+\lambda^*B}=1$ by solving for the roots of a single polynomial equation (at worse a quadratic) and then conducting at most two matrix comparisons. This approach dramatically speeds up solving the matrix pencil problem for the rank-1 case. Let us begin by stating some facts about rank-1 matrices that will allow us to easily ascertain when a given matrix is rank-$1$ and to solve for values of $\lambda^*$, when they exist, such that the matrix pencil, $A+\lambda^* B$, is a rank-$1$ pencil.

\begin{fact}\label{fact:rank1facts}
The matrix $M$ in $\Re^{m\times n}$ is rank-$1$ if and only if $M\neq\vct{0}_{m\times n}$ and the following hold true:
\begin{enumerate}
    \item Every row (column) of $M$ is a scalar multiple of every other row (column) of $M$.
    \item Choose any element $m_{i,j}$ of $M$ such that $m_{ij}\neq0$ and form $\vct{r}_j=M^{(j)}$, $\vct{c}_i^\transpose=m_{i,j}^{-1}M_{(i)}$. Then, $M=\vct{r}_j\vct{c}_i^\transpose$.
\end{enumerate}
\end{fact}

\begin{theorem}\label{thm:rank1Pencil}
Let $A,B\in\Re^{m \times n}$ be such that $A,B\neq\vct{0}_{m\times n}$ and $\rank{A+B}>1$. Choose any $(i,j)\in\{1\dots m\}\times \{1\dots n\}$ such that $a_{i,j}\neq0$. Such an $a_{i,j}$ is guaranteed to exist since $A\neq\vct{0}_{m\times n}$. Construct $\vct{r}_j(\lambda)$ and $\vct{c}_i(\lambda)^\transpose$ as 
\begin{align}\label{eqn:rc}
    \vct{r}_j(\lambda)=A^{(j)}+\lambda B^{(j)}, \quad \vct{c}_i(\lambda)^\transpose=\frac{1}{a_{i,j}+\lambda b_{i,j}}(A_{(i)}+\lambda B_{(i)})
\end{align}
Then, there exists $\lambda^*\in\Ce$ such that $\rank{A+\lambda^* B}=1$ if and only if either:
\begin{enumerate}
    \item $b_{i,j}\neq0$ and $\rank{A+\frac{-a_{i,j}}{b_{i,j}} B}=1$; or \label{cond:thm:rank1Pencil:bij}
    \item $A+\lambda^*B=\vct{r}_j(\lambda^*)\vct{c}_i(\lambda^*)^\transpose$. \label{cond:thm:rank1Pencil:decomp}
\end{enumerate}
\end{theorem}
\begin{proof}
    We first prove the forward direction. Suppose there exists $\lambda^*\in\Ce$ such that $\rank{A+\lambda^* B}=1$. We split the proof of the forward direction into two cases:
    
    {\bf Case $\lambda^*=\frac{-a_{i,j}}{b_{i,j}}$:} Suppose that $\lambda^*=\frac{-a_{i,j}}{b_{i,j}}$. Since $\lambda^*\in\Ce$, this implies $b_{i,j}\neq0$. Furthermore, $\rank{A+\lambda^* B}=1$ and $\lambda^*=\frac{-a_{i,j}}{b_{i,j}}$ implies $\rank{A+\frac{-a_{i,j}}{b_{i,j}} B}=1$. In addition, note that $\lambda^*=\frac{-a_{i,j}}{b_{i,j}}$ implies that $\vct{c}_i(\lambda^*)$ is undefined; therefore, the expression $A+\lambda^*B=\vct{r}_j(\lambda^*)\vct{c}_i(\lambda^*)^\transpose$ is undefined and cannot hold true.
    
    {\bf Case $\lambda^*\neq \frac{-a_{i,j}}{b_{i,j}}$:} Now, suppose $\lambda^*\neq\frac{-a_{i,j}}{b_{i,j}}$. Then $\rank{A+\frac{-a_{i,j}}{b_{i,j}} B}\neq1$. Also, $a_{i,j}+\lambda^*b_{i,j}\neq0$, so $\vct{c}_i(\lambda^*)$ is well-defined. Then $A+\lambda^*B=\vct{r}_j(\lambda^*)\vct{c}_i(\lambda^*)^\transpose$ follows from Fact \ref{fact:rank1facts}.
    
    Conversely, suppose that $b_{i,j}\neq0$ and $\rank{A+\frac{-a_{i,j}}{b_{i,j}} B}=1$. Then $\lambda^*=\frac{-a_{i,j}}{b_{i,j}}\in\Re\subset\Ce$ and $\rank{A+\lambda^* B}=1$. Of course, as in above, since $\vct{c}_i(\lambda^*)^\transpose$ is undefined at $\lambda^*=\frac{-a_{i,j}}{b_{i,j}}$ by definition, we conclude that $A+\lambda^*B\neq \vct{r}_j(\lambda^*)\vct{c}_i(\lambda^*)^\transpose$.
    
    Now, suppose $A+\lambda^*B=\vct{r}_j(\lambda^*)\vct{c}_i(\lambda^*)^\transpose$, which implies that $\vct{c}_i(\lambda^*)$ is well-defined. This implies that $b_{i,j}=0$ and/or $\rank{A+\frac{-a_{i,j}}{b_{i,j}} B}\neq1$. Furthermore, since $\lambda^*$ is the solution to a system of $m\times n$ linear or quadratic equations with real coefficients, we have $\lambda^*\in\Ce$. Then,  $\rank{A+\lambda^* B}=1$ follows from Fact \ref{fact:rank1facts}.
    \qed
\end{proof}

It is trivial to determine whether or not $b_{i,j}\neq0$ and $\rank{A+\frac{-a_{i,j}}{b_{i,j}} B}=1$. Thus, we will assume in the sequel that $\rank{A+\frac{-a_{i,j}}{b_{i,j}} B}\neq1$. 
Let us now examine the matrix equality $A+\lambda B=\vct{r}_j(\lambda)\vct{c}_i(\lambda)^\transpose$, introduce some additional notation, and state some lemmas that allow us to determine whether or not there exists a finite $\lambda^*$ such that $\rank{A+\lambda^* B}=1$.

With $\vct{r}_j(\lambda)$ and $\vct{c}_i(\lambda)$ as defined in Theorem \ref{thm:rank1Pencil}, let us write the following system of equations: $A+\lambda B=\vct{r}_j(\lambda)\vct{c}_i(\lambda)^\transpose$, that is,
\begin{align}
    &\begin{bmatrix}\label{eq:rank1Pencil:system}
     a_{1,1}+\lambda b_{1,1} &  \dots & a_{1,n}+\lambda b_{1,n} \\
    \vdots                 & \ddots  & \vdots                          \\
    a_{m,1}+\lambda b_{m,1}  & \dots & a_{m,n}+\lambda b_{m,n}
    \end{bmatrix}\\
    &=
    \begin{bmatrix}\label{eq:rank1Pencil:decompMatrix}
    \frac{(a_{1,j}+\lambda b_{1,j})(a_{i,1}+\lambda b_{i,1})}{a_{i,j}+\lambda b_{i,j}} & \dots & \frac{(a_{1,j}+\lambda b_{1,j})(a_{i,n}+\lambda b_{i,n})}{a_{i,j}+\lambda b_{i,j}} \\
    \vdots                           & \ddots& \vdots                          \\
    \frac{(a_{m,j}+\lambda b_{m,j})(a_{i,1}+\lambda b_{i,1})}{a_{i,j}+\lambda b_{i,j}} &  \dots & \frac{(a_{m,j}+\lambda b_{m,j})(a_{i,n}+\lambda b_{i,n})}{a_{i,j}+\lambda b_{i,j}}
    \end{bmatrix}
\end{align}
Since $\lambda$ is a scalar variable, it is clear from \eqref{eq:rank1Pencil:decompMatrix} that \eqref{eq:rank1Pencil:system} only has a solution (or possibly multiple solutions), $\lambda^*$, if $\lambda^*$ simultaneously satisfies $m\times n$ single-variable polynomials, where each polynomial is of degree at most $2$. Thus, one could solve all $m\times n$ single-variable polynomials and then check whether or not every solution has a common value. While this procedure is somewhat efficient, we'll show below that at most $(m-1)\times(n-1)$ of the polynomials have finite solutions and therefore contribute any meaningful information.  In addition, we'll show that it is sufficient to identify one polynomial that is not the zero polynomial and then conduct a matrix checking problem for the solution(s) of that polynomial.

Let us now introduce notation for the polynomial represented by row $s$ and column $t$ in \eqref{eq:rank1Pencil:decompMatrix}. For $(s,t)\in\{1\dots m\}\times \{1\dots n\}$, let 
\begin{equation}\label{eq:fst}
f_{s,t}(i,j;\lambda)=a_{s,t}+\lambda b_{s,t}-\frac{(a_{s,j}+\lambda b_{s,j})(a_{i,t}+\lambda b_{i,t})}{a_{i,j}+\lambda b_{i,j}}
\end{equation}

From \eqref{eq:fst}, it is clear that when $s=i$ or $t=j$, then $f_{i,t}(i,j;\lambda)$ and $f_{s,j}(i,j;\lambda)$ are the zero polynomial.  In other words, $f_{s,t}(i,j;\hat{\lambda})=0$ trivially holds true for all $\hat{\lambda}\in\Ce$ for one entire column and one entire row of \eqref{eq:rank1Pencil:decompMatrix}. Since these $m+n-1$ expressions hold true for all $\hat{\lambda}\in\Ce$, they lend no information for determining whether or not there exists $\lambda^*$ such that $\rank{A+\lambda^* B}=1$. Thus, we can disregard these $m+n-1$ polynomials and only consider the remaining $(m-1)\times(n-1)$ polynomials. We show that at least one of the remaining $(m-1)\times(n-1)$ polynomials is not the zero polynomial and present a method for determining whether or not there exists a $\gamma^*$ such that the pencil $A+\gamma^*B$ is a rank-1 pencil.    



\begin{theorem}\label{thm:rank1PencilSolveLambda}
    Consider non-zero matrices $A,B\in\Re^{m \times n}$ with $\rank{A+B}>1$. Pick any $(l,k)\in\{1\dots m\}\times \{1\dots n\}$ such that $f_{l,k}(i,j;\lambda)$ is not the zero polynomial. Let $\hat{\lambda}_1,\hat{\lambda}_2$ be solutions to $f_{l,k}(i,j;\lambda)=0$. Then, there exists $\lambda^*\in\Ce$ such that $\rank{A+\lambda^*B}=1$ if and only if $A+\hat{\lambda}_1 B=\vct{r}_j(\hat{\lambda}_1)\vct{c}_i(\hat{\lambda}_1)^\transpose$ or/and $A+\hat{\lambda}_2 B=\vct{r}_j(\hat{\lambda}_2)\vct{c}_i(\hat{\lambda}_2)^\transpose$, where $\vct{r}_j(\lambda)$ and $\vct{c}_i(\lambda)$ are defined in \eqref{eqn:rc}. 
\end{theorem}
\begin{proof}
We divide the proof into three steps:

    {\it Step 1:} Let us first prove that since $\rank{A+B}>1$, there exists at least one pair $(l,k)\in\{1\dots m\}\times \{1\dots n\}$ such that $f_{l,k}(i,j;\lambda)$ is not the zero polynomial. Suppose, by way of contradiction, that $f_{s,t}(i,j;\lambda)$ is the zero polynomial for all $(s,t)\in\{1\dots m\}\times \{1\dots n\}$. This implies that for any $\hat{\lambda}\in\Ce$ and for all $(s,t)\in\{1\dots m\}\times \{1\dots n\}$, $f_{s,t}(i,j;\hat{\lambda})=0$. Furthermore, this implies that for any $\hat{\lambda}\in\Ce$,  $A+\hat{\lambda} B=\vct{r}_j(\hat{\lambda})\vct{c}_i(\hat{\lambda})^\transpose$ and $\rank{A+\hat{\lambda} B}=1$.  In particular, $\rank{A+B}=1$, which is a contradiction. \
    
    {\it Step 2:} Now, suppose there exists $\lambda^*\in\Ce$ such that $\rank{A+\lambda^* B}=1$. Then, by Theorem \ref{thm:rank1Pencil}, $A+\lambda^*B=\vct{r}_j(\lambda^*)\vct{c}_i(\lambda^*)^\transpose$. Thus, $f_{s,t}(i,j;\lambda^*)=0$ for all $(s,t)\in\{1\dots m\}\times \{1\dots n\}$. In particular, $f_{l,k}(i,j;\lambda^*)=0$. Therefore, either $\hat{\lambda}_1=\lambda^*$ or/and $\hat{\lambda}_2=\lambda^*$. It then follows that $A+\hat{\lambda}_1B=\vct{r}_j(\hat{\lambda}_1)\vct{c}_i(\hat{\lambda}_1)^\transpose$ or/and $A+\hat{\lambda}_2 B=\vct{r}_j(\hat{\lambda}_2)\vct{c}_i(\hat{\lambda}_2)^\transpose$.
    
    {\it Step 3:} In the other direction, let $\hat{\lambda}_1,\hat{\lambda}_2$ be solutions to $f_{l,k}(i,j;\lambda)=0$. Note that $f_{l,k}(i,j;\lambda)=0$ may be a linear equation.  In that case, for simplicity, let $\hat{\lambda}_1=\hat{\lambda}_2$. 
    Now, suppose $A+\hat{\lambda}_1B=\vct{r}_j(\hat{\lambda}_1)\vct{c}_i(\hat{\lambda}_1)^\transpose$ and let $\lambda^*=\lambda_1$. Then $A+\lambda^*B=\vct{r}_j(\lambda^*)\vct{c}_i(\lambda^*)^\transpose$ and $\rank{A+\lambda^*B}=1$ by Theorem \ref{thm:rank1Pencil}.  The case of $A+\hat{\lambda}_2 B=\vct{r}_j(\hat{\lambda}_2)\vct{c}_i(\hat{\lambda}_2)^\transpose$ is similar and therefore omitted.  
    \qed
\end{proof}

Let us use $\Lambda^1:\Rea^{m\times n}\times\Rea^{m\times n}\rightarrow 2^{\Ce}$ to represent the solution set obtained from Theorem \ref{thm:rank1PencilSolveLambda}, that is, 
\begin{align}
    \Lambda^1(A,B) = \Big\{\lambda^*\in \Ce: \rank{A+\lambda^*B} = 1\Big\}.\label{eqn:LambdaAB}
\end{align}
Note that $\Lambda^1(A,B)$ has a maximum cardinality of $2$ and may be empty. Further, Theorem \ref{thm:rank1Pencil} implies that if $b_{i,j}\neq0$ and $\rank{A+\frac{-a_{i,j}}{b_{i,j}} B}=1$, then $\frac{-a_{i,j}}{b_{i,j}}\in\Lambda^1(A,B)$. 

Similar to Subsection \ref{subsec:matrixPencils}, we are only concerned with the real, strictly positive eigenvalues. Therefore, we define the restricted set of eigenvalues as:
\begin{equation*}
    \Lambda^1_{>0}(A,B)=\{\lambda\in \Re_{>0}\vert \lambda\in\Lambda^1(A,B)\} = \Lambda^1(A,B)\cap (0,\infty).
\end{equation*}
Unlike $\Lambda_{>0}(A,B)$ defined in Subsection \ref{subsec:matrixPencils}, any $\lambda^*\in\Lambda^1_{>0}(A,B)$ guarantees that $\rank{A+\lambda^*B} = 1$. Thus, our modified three step process for determining whether there exists $(\hat{A},\hat{B})\in \Upsilon(\tilde A,\tilde B)$ such that $\rank{\hat{A}+\hat{B}}=1$ is:
\begin{enumerate}
\item Compute $(A^\dagger,\vct u^\dagger_A,\vct v^\dagger_A) = \Xi(\tilde A)$ and $(B^\dagger,\vct u^\dagger_B,\vct v^\dagger_B) = \Xi(\tilde B)$.
\item Compute $\Lambda^1_{>0}(A,B)$. 
\begin{enumerate}
    \item If $\Lambda^1_{>0}(A,B)\neq \emptyset$, let $\gamma^*=\lambda^*\in\Lambda^1_{>0}(A,B)$. Define $\hat{\vct u}$ and $\hat{\vct v}$ as
        \begin{align*}
         \hat{\vct u}:=\vct u_A^\dagger+\gamma^*\vct u^\dagger_B 
         \quad \hat{\vct v}:=\vct v_A^\dagger+\gamma^*\vct v^\dagger_B.
        \end{align*}
        \item Output $\Pi^\dagger\Big(\tilde A,\tilde B\Big) := \Big(\hat A,\hat B\Big)$, where $\hat A = \tilde A - \vct{1}_m\hat{\vct{u}}^\transpose$ and $\hat B = \gamma^*\tilde B -\hat{\vct{v}}\vct{1}_n^\transpose$.
\end{enumerate}
\item If $\Lambda^1_{>0}(A,B)= \emptyset$, output ``No such rank-1 game exists"
\end{enumerate}

Unlike the three step process that we developed in Subsection \ref{sub:MainResult}, the output of the process here is either a strategically equivalent rank-1 game or the fact that no such game exists. In the next section we discuss generic games and algorithmic implications.

\section{Some Results on Rank Reduction in Generic Games}\label{sec:generic}
In this section, we collect some natural consequences of Proposition \ref{prop:colSpanC2} for reducing the rank of generic games. Recall that a game is determined by the matrices $(\tilde A,\tilde B)\in\Re^{2\times(m\times n)}$. Thus, one can view $\Re^{2\times(m\times n)}$ as the space of all games. If one picks a generic game from this space, it is natural to ask whether or not the rank of the game can be reduced and by how much. 

It turns out that Proposition \ref{prop:colSpanC2} allows us to conclude the following two results, where we only focus on the case of $\gamma = 1$ to ease analysis. In the first result, we focus on generic games in which $m=n$. It should be noted that the set of generic games in the space $\Re^{2\times(m\times m)}$ with full rank has full measure. We conclude that for such games, we can reduce the rank only by 1. In the second result, we consider the case where $m<n$.

\begin{proposition}\label{prop:squaregame}
If $m=n$, then for almost every (random) game $(m,m,A,B)$, the rank of the game can only be reduced by 1.
\end{proposition}
\begin{proof}
    In order to prove the result, we show that the set of games whose rank can be reduced by two has Lebesgue measure 0. We prove this in two steps:
    
    \noindent {\it Step 1:} First, recall that for almost every game $(m,m,A,B)$, $rank(C) = m$. Note that by Proposition \ref{prop:colSpanC2}, we can reduce the rank of the game by 2 if there exists $\vct{x}_1$ satisfying $\vct{1}_m^\transpose \vct{x}_1 = 0$ and $C\vct{x}_1 = \vct{1}_m$. Since $C$ is full rank, there exists a unique $\vct{x}_1$ satisfying $C\vct{x}_1 = \vct{1}_m$, which is given by $\vct{x}_1 = C^{-1}\vct{1}_m$. Thus, the rank of the game can be reduced by 2 if and only if $\vct{1}_m^\transpose C^{-1}\vct{1}_m = 0$. This holds if the cofactors of the matrix $C$ sum to zero. The sum of cofactors is a multivariate polynomial of degree $m-1$, and thus, the set of points where this polynomial equals 0 is an $(m-1)$ dimensional manifold in a $m^2$ dimensional space. Consequently, the set of all $C\in\Rea^{m\times m}$ whose cofactors sum to zero has Lebesgue measure 0. 
    
    \noindent {\it Step 2:} Now, we note that $B = C-A$. Since for almost every $C$, the sum of cofactors is not zero, we conclude that the for almost every game $(m,m,A,B)$, the rank can only be reduced by 1. The proof of the result is complete.
    \qed 
\end{proof}


In contrast to the case considered above, we next consider games with $m\leq n$, rank of the sum $C$ of payoff matrices is $k\leq m$, $\vct{1}_m\in\colspan{C}$, and $\vct{1}_n\in\colspan{C^\transpose}$ is satisfied. Let us define this space as $\mathcal C_k$:
\begin{equation}
    \mathcal C_k = \Big\{C\in\Re^{m\times n} \big| rank(C) = k, \vct{1}_m\in\colspan{C}, \vct{1}_n\in\colspan{C^\transpose}\Big\}
\end{equation}
\begin{proposition}
Let $2\leq m\leq n$. Consider a rank-$k$ game $(m,n,A,B)$, and assume that $C:= A+B \in\mathcal C_k$. Split the matrix $C$ as $C =  \begin{bmatrix} C_{11} & C_{12}\\ C_{21} & C_{22}  \end{bmatrix}$, where $C_{11}$ is a $k\times k$ full rank submatrix. Then, the rank of the game can be reduced by 2 if and only if there exists a vector $\vct x_{12}\in\Re^{n-k}$ such that
\begin{align}\label{eqn:Creduce2}
 \mat{C_{22} - C_{21} C_{11}^{-1}C_{12}\\\vct 1_{n-k}^\transpose - \vct{1}_k^\transpose C_{11}^{-1}C_{12}}\vct x_{12} = 
 \mat{\vct{1}_{n-k} - C_{21}C_{11}^{-1} \vct 1_k \\-\vct{1}_k^\transpose C_{11}^{-1}\vct{1}_k }.
\end{align}
\end{proposition}
\begin{proof}
    Let $\vct{x}_1$ be such that $C\vct{x}_1 = \vct{1}_m$. Split the matrix $C$ and $\vct{x}_1$ as $\begin{bmatrix} \vct{x}_{11}\\ \vct{x}_{12}  \end{bmatrix}$, where $\vct{x}_{11}$ is a $k\times 1$ vector, and $\vct{x}_{12}$ is a $(n-k)\times 1$ vector. We also note here that if $k = m$, then $C_{21}$ and $C_{22}$ are empty matrices. Now, due to Proposition \ref{prop:colSpanC2} in Appendix \ref{app:Drankred}, we can reduce the rank of the game by 2 if there exists $\vct{x}_1$ satisfying $\vct{1}_n^\transpose \vct{x}_1 = 0$ and $C\vct{x}_1 = \vct{1}_m$. Thus, $C\vct{x}_1 = \vct{1}_m$ implies
    \begin{align*}
        C_{11}\vct{x}_{11} + C_{12} \vct{x}_{12} &= \vct{1}_k \implies
        \vct{x}_{11}  = C_{11}^{-1}\vct{1}_k - C_{11}^{-1}C_{12} \vct{x}_{12},\\
        C_{21}\vct{x}_{11} + C_{22} \vct{x}_{12} &= \vct{1}_{n-k}, \\
        \implies
        C_{22}\vct{x}_{12}  & = \vct{1}_{n-k} - C_{21}C_{11}^{-1} \vct 1_k + C_{21} C_{11}^{-1}C_{12} \vct{x}_{12},\\
        \implies \Big[C_{22} - C_{21} C_{11}^{-1}C_{12}\Big]\vct  x_{12} & = \Big(\vct{1}_{n-k} - C_{21}C_{11}^{-1} \vct 1_k \Big).
    \end{align*}
    Now, substituting the above expressions in the equation $\vct{1}_n^\transpose \vct{x}_1 = 0$ yields
    \begin{align*}
        \vct{1}_k^\transpose\vct{x}_{11} + \vct{1}_{n-k}^\transpose\vct{x}_{12} & =  \vct{1}_k^\transpose C_{11}^{-1}\vct{1}_k +\Big(\vct 1_{n-k}^\transpose - \vct{1}_k^\transpose C_{11}^{-1}C_{12}\Big)\vct x_{12} = 0\\
        \implies \Big(\vct 1_{n-k}^\transpose - \vct{1}_k^\transpose C_{11}^{-1}C_{12}\Big)\vct x_{12}  &= -\vct{1}_k^\transpose C_{11}^{-1}\vct{1}_k 
    \end{align*}
    which yields the result.\qed
\end{proof}

In \cite[Lemma~2.5]{theobald2009enumerating}, the author shows that all $m\times m$ games of rank-$m$ are strategically equivalent to a game of rank $m-1$. We briefly restate that lemma here in order to compare and contrast our results.

\begin{lemma}[{\citet[Lemma~2.5]{theobald2009enumerating}}]\label{lem:theobaldsLemma}
Let $(m,m,A,B)$ be a game of rank $m$. Then there exists a game of rank ($m-1$) with the same set of Nash equilibria.
\end{lemma}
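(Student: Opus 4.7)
The plan is to recognize Lemma \ref{lem:theobaldsLemma} as an immediate specialization of Corollary \ref{cor:thm:generalGame1m:fullRank} to the square case $n = m$ (with $m \geq 2$). Since $\rank{A+B} = m$, the matrix $C := A+B$ is square and invertible, so $\vec{1}_m \in \colspan{C}$ holds trivially; in particular, $\vec{x}_1 := C^{-1}\vec{1}_m$ is the unique solution of $C\vec{x}_1 = \vec{1}_m$. This is precisely the hypothesis required to invoke Theorem \ref{thm:generalGame1m}, so the bulk of the work has already been done.

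Next, I would make the construction explicit so that the reader sees the rank-reduced game concretely. Pick any $\vec{y}_1$ supplied by Lemma \ref{lem:existXY}; for concreteness, take $\vec{y}_1 = \vec{e}_i$ for any index $i$, which is valid because full rank of $C$ guarantees that every row of $C$ is nonzero. Then the Wedderburn pivot satisfies $w_1 = \vec{y}_1^\transpose C \vec{x}_1 = \vec{y}_1^\transpose \vec{1}_m = 1 \neq 0$ automatically. Setting $\hat{\vec{u}}^\transpose := w_1^{-1}\vec{y}_1^\transpose C$ and applying the Wedderburn formula \eqref{eq:Wedderburn} yields
\begin{equation*}
C_2 \;=\; C - w_1^{-1} C \vec{x}_1 \vec{y}_1^\transpose C \;=\; C - \vec{1}_m \hat{\vec{u}}^\transpose,
\end{equation*}
and $\rank{C_2} = m-1$ by Theorem \ref{thm:wedderburn}.

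Finally, define $\hat{A} := A - \vec{1}_m \hat{\vec{u}}^\transpose$ and $\hat{B} := B$, so that $\hat{A}+\hat{B} = C_2$ has rank exactly $m-1$. Strategic equivalence of $(m,m,A,B)$ and $(m,m,\hat{A},\hat{B})$ then follows directly from Lemma \ref{lem:stratEqVec} with parameters $\alpha_1 = \alpha_2 = 1$, $\beta_1 = -1$, $\vec{u} = \hat{\vec{u}}$, $\vec{v} = \vec{0}_m$, and $\beta_2 = 0$. The main obstacle is conceptual rather than technical: one simply has to observe that the full-rank square case is the cleanest instance of the general framework of Section \ref{sec:MainResult}, because invertibility of $C$ simultaneously supplies the preimage $\vec{x}_1$ of $\vec{1}_m$ and forces the pivot $w_1$ to be nonzero for essentially any reasonable choice of $\vec{y}_1$; no matrix pencil analysis or additional column-span assumption is required.
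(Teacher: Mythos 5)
Your proposal is correct and follows the same route the paper intends: Lemma \ref{lem:theobaldsLemma} is exactly the square case $n=m$ of Corollary \ref{cor:thm:generalGame1m:fullRank}, which rests on Theorem \ref{thm:generalGame1m} and the Wedderburn step, and your verification that $\vec{x}_1=C^{-1}\vec{1}_m$ exists and that $w_1=\vec{y}_1^\transpose\vec{1}_m=1$ for $\vec{y}_1=\vec{e}_i$ is accurate. The only (harmless) addition is making the construction and the parameters of Lemma \ref{lem:stratEqVec} explicit, which matches the paper's remark that its version of this result is constructive.
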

\begin{lemma}\label{lem:thm:generalGame1m:fullRank}
    Consider the game  $(m,n,\bar{A},\bar{B})$ with $n\geq m\geq2$, $\bar{C}=\bar{A}+\bar{B}$, and $\rank{\bar{C}}=\bar{k}=m$. The game $(m,n,\bar{A},\bar{B})$ is strategically equivalent to a game of rank-$(\bar{k}-1)$
\end{lemma}
\begin{proof}
   Clearly $\rank{\bar{C}}=m$ implies that there exists $\vct x_1\in\Rea^n$ such that $\bar{C}\vct{x}_1=\vct{1}_m$. The result then directly follows from Theorem \ref{thm:theMainResult}.\qed
\end{proof}

Our result in Lemma \ref{lem:thm:generalGame1m:fullRank} generalizes Lemma \ref{lem:theobaldsLemma} to the game $(m,n,A,B)$, with $m\leq n$ and $\rank{A+B}=m$. In addition, Theorem \ref{thm:theMainResult} further generalizes the result to hold for certain games that are not full rank.  Furthermore, for all cases, our results are constructive and provide a method for calculating the lower-rank equivalent game.

\section{A Numerical Example and Algorithmic Implications}\label{sec:numerical}

In this section, we show a small numerical example and discuss the complexity of our proposed algorithms.
\subsection{Numerical Example}\label{subsec:numerical}
Let us consider the following bimatrix game: In this game, $m=n=6$ and is a full rank game.
\begin{align*}
 &\tilde A = 
 \mat{-32 & -98 & 39  & 12   & -66  & -67\\
       0  & 74  & -29 & -28  &  62  &   21\\
    56  & -14  & -33  &  20  & -42  &  -15\\
  -109  &-121  & -38  & -31  & -57  &  -92\\
     6  &  42  & -63  & -26  &  20  &   1\\
   -40  & -84  &  11  &  -2  & -44  & -59
   }
 \tilde B = \mat{ 
    -32  & -98 &   39 &   12 &  -66 &  -67\\
     0   & 74  & -29  & -28   & 62  &  21\\
    56   &-14  & -33  &  20   &-42 &  -15\\
  -109  &-121   &-38  & -31  & -57 &  -92\\
     6   & 42   &-63   &-26  &  20  &   1\\
   -40  & -84  &  11  &  -2  & -44  & -59
   }\\
   & \text{Singular values of } \tilde C = \{210.53, 66.25, 59.43, 14.65, 3.44, 1.49\}
\end{align*}
In the Step 2 of the rank reduction process, we get $\Lambda(\bar A,\bar B) = \{2,2,4,4\}$, so we pick $\gamma^* = 4$. After applying Step 4 of the rank reduction process, we arrive at the following game, which is strategically equivalent to the game above.
\begin{align*}
 & \hat A = \mat{
    -8  &-104 &   60  &  28  & -84 &  -72\\
    24  &  68  &  -8 &  -12 &   44  &  16\\
    80  & -20  & -12 &   36  & -60  & -20\\
   -85  &-127 &  -17 &  -15 &  -75  & -97\\
    30  &  36  & -42 &  -10 &    2   & -4\\
   -16   &-90  &  32  &  14 &  -62 &  -64
   }
\hat B = \mat{
     8 &  104 &  -60 &  -28 &   84   & 72\\
   -24  &-136  &   8   & -4  &-132 &  -68\\
   -80  &  36 &   12  & -28  &  68 &   28\\
    85  &  49  &  17   & -3  & -27 &   37\\
   -30  & -90  &  42   & -2  & -74 &  -38\\
    16   & 68  & -32  & -16  &  24 &   44\\
} \\
& \text{Singular values of } \hat C = \{219.97, 11.96\}.
\end{align*}
As we see, the lower rank game is a game of rank 2. It is also interesting to note that if we instead pick $\gamma^* = 2$ (since it also has multiplicity 2), and we arrive at a different rank-2 game after the rank reduction process is complete.

\subsection{Algorithmic Implications}\label{subsec:algorithm}

Since our algorithms are algebraic in nature, we use here the common arithmetic computational cost model where one considers the number of arithmetic operations and comparisons \citep{golub2012matrix,pan1999complexity} as simply `operations'. We also, without loss of generality, assume $m\leq n$. If this is not the case, one can simply consider the transposed game, $(n,m,B^\transpose,A^\transpose)$, and all results in this section still hold.

First, considering the map, $\Psi(D)$, checking conditions 1,2, and 3 is equivalent to determining the rank of a properly augmented matrix.  For example, to check condition 2 one could verify: 
    \begin{equation*}
      rank\left(\begin{bmatrix} D & \vct{1}_m  \end{bmatrix}\right) = rank(D).
    \end{equation*}
Condition 4 is simply the solution of a linear system. Both solving for rank and solving a linear system are efficiently computable via a variety of methods, and can be done in at most $\bigO(n^3)$ operations. Computing $\vct{u}$ and $\vct{v}$ for the map $\Psi(D)$ is simply selecting a row (column) of a matrix, a scalar-vector multiplication, and vector addition, which can all be done in $\bigO(n)$ operations. Similarly, the map $\Xi(D)$ requires matrix-vector multiplication, matrix multiplication, and matrix addition, thus can be completed in $\bigO(n^3)$ operations.

Calculating $\gamma^*:=\gamma^*(A^\dagger,B^\dagger)$ requires first calculating the TWCF in \eqref{eq:lem:TW1} using elementary row and column operations. This is equivalent to pre- or post-multiplying $A^\dagger,B^\dagger$ by square, nonsingular matrices and requires $4(2m^2n +2mn^2)$ operations. If necessary, calculating \eqref{eq:lem:TW2} is similar, and also requires $4(2m^2n +2mn^2)$ operations. So, each iteration requires $\bigO(n^3)$ operations. At each iteration of Algorithm \ref{alg:TWCF}, the matrices $A_i,B_i$ have dimensions that are strictly smaller than the original dimensions ($m_i<m,n_i<n$); however, the decrease in dimensionality per iteration is problem dependent. While the algorithm can terminate earlier, it must eventually terminate when $A_i,B_i\in\Rea^{1\times 1}$, or after at most $m$ iterations. Thus, at the worst case, with $n=m$, Algorithm \ref{alg:TWCF} requires $\bigO(n^4)$ operations. This still leaves the problem of computing the eigenvalues, of $(E_{11,i},I_{r_i})$, but this only has to be done once, on matrices with dimensions $m_i=n_i\leq n$. Since eigenvalues can be calculated, up to fixed precision, $\epsilon$, in $\bigO(n^3)$ \citep[p. 391]{golub2012matrix}, \citep{ pan1999complexity}, the overall running time is dominated by the running time of $\bigO(n^4)$.

Finally, the map $\Pi\Big(A, B\Big)$ requires vector-vector multiplication and matrix addition, and can be done in $\bigO(mn)$ operations. Overall, the running time of the algorithm is, not surprisingly, dominated by solving the matrix pencil problem and has complexity $\bigO(n^4)$.

In comparison, for the special case of rank-1 matrix pencils, the algorithm presented in Section \ref{sec:rank1Pencils} runs in $\bigO(mn)$ operations. To see this, we first note that by Fact \ref{fact:rank1facts}, determining whether a matrix is rank-$1$ or not is equivalent to $mn$ divisions and $mn$ comparisons, and therefore $\bigO{(mn)}$. As mentioned in Section \ref{sec:rank1Pencils}, finding an $(s,t)\in\{1\dots m\} \times \{1\dots n\}$ such that $f_{s,t}(i,j;\lambda)$ is not the zero polynomial requires a search over a space of $(m-1)\times(n-1)$, thus time $\bigO(mn)$. Calculating the coefficients of $f_{s,t}(i,j;\lambda)$ requires at most $4$ scalar multiplications and $3$ scalar additions/subtractions, which requires time $\bigO(1)$. Now, at worst solving $f_{s,t}(i,j;\lambda)=0$ is equivalent to computing a square root, which has the same complexity as multiplication \citep{alt1979square}. Furthermore, we note that only one such $f_{s,t}(i,j;\lambda)=0$ must be solved, thus this step is independent of the size of the game. With candidate values of $\hat{\lambda}$ thus determined, checking whether $\bar{A}+\hat{\lambda}\bar{B}=\vec{r}_j(\hat{\lambda})\vec{c}_i(\hat{\lambda})^\transpose$ requires one vector outer product and one matrix comparison, thus takes time $\bigO(mn)$. Therefore, for this special case, we remove the matrix pencil bottleneck and the overall running time of our algorithm is dictated by the maps $\Psi(D)$ and $\Xi(D)$, with complexity $\bigO(n^3)$.

\section{Conclusion}\label{sec:conclusion}
Nonzero-sum games have been shown to be computationally challenging to solve. In this paper, we present an alternative approach to computing equilibrium in a certain class of nonzero-sum games. Given a nonzero-sum game, our approach exploits the strategic equivalence between bimatrix games to construct a lower rank bimatrix game that is strategically equivalent to the original game via a positive affine transformation. Moreover, our technique is constructive, that is, we present an algorithm to reduce the rank of the given game and this algorithm is efficient (the runtime complexity is dominated by the execution runtime of the matrix pencil problem).

Our approach has the potential to reduce the rank of the game substantially in some cases. If the original game can be reduced to a rank-0 (also known as zero-sum game) or a rank-1 game, then we know that it can be solved efficiently using algorithms involving linear programs or parameterized linear programs, respectively. If the game is of low rank or sparse, one can employ polynomial time approximation algorithms to determine an approximate Nash equilibrium. Thus, our rank reduction approach substantially expands the class of games that can be solved in polynomial time.

One important problem left open for the future is as follows. Recall that the weakest definition of strategic equivalence was introduced in \cite{liu1996invariance}, in which two games are strategically equivalent if and only if their best response correspondences are equal. Can we devise an algorithm that outputs a lower rank game that preserves the best response correspondence? Such an approach can further expand the class of games that are solvable in polynomial time. We leave this problem for future research. 

\appendix

\section{Proof of Lemma \ref{lem:Drankred}}\label{app:Drankred}

The proof of Lemma \ref{lem:Drankred} is divided into four cases.

{\bf Case 1:} For this case, we cannot apply the Wedderburn rank reduction formula to yield a strategically equivalent game. Thus, $\hat{\vct u} = \vct 0_n$, $\hat{\vct v} = \vct 0_m$, and $l = 0$. 
{\bf Case 2:} If $\vct{1}_m\in\colspan{D}$, then there exists $\vct x_1\in\Rea^n$ and $\vct y_1  = \vct e_i \in\Rea^m$ such that	$D\vct{x}_1=\vct{1}_m$ and $w_1=\vct{y}_1^\transpose D\vct{x}_1 = 1$. Let $\hat{\vct{u}}^\transpose=w_1^{-1}\vct{y}_1^\transpose D = D_{(i)}^\transpose$ and compute $D_2$ using \eqref{eq:Wedderburn} as follows:
\begin{equation*}
    D_2 = D-w_1^{-1}D\vct{x}_1\vct{y}_1^\transpose D=D-\vct{1}_m\hat{\vct{u}}^\transpose.
\end{equation*}

{\bf Case 3:} If $\vct{1}_n\in\colspan{D^\transpose}$, then we apply Case 2's operation to $D^\transpose$ by taking $\vct x_1 = \vct e_j\in\Re^n$ and defining $\hat{\vct v} = D_{(j)}$.

Before we address Case 4, we first state a proposition that is necessary for the proof. Motivated by \cite[Theorem~2.1]{chu1995rank}, we have the following proposition that presents conditions under which a chosen vector is in the row span of the matrix $C_2$ obtained after one application of \eqref{eq:Wedderburn}.
\begin{proposition}\label{prop:colSpanC2}
We use here the same notation as in Theorem \ref{thm:wedderburn}. Let $\rank{C}=k$, with $k \geq 2 $. Let $\{\vct{x}_1,\vct{y}_1\}$ be vectors associated with a rank-reducing process (so that $\vct{y}_1^\transpose C \vct{x}_1\neq0$). We have $\vct{z}\in \colspan{C_2^\transpose}$ if and only if $\vct{z}\in \colspan{C^\transpose}$ and $\vct{z}\perp\vct{x}_1$.
\end{proposition}
\begin{proof}

Suppose that $\vct{z}\in \colspan{C_2^\transpose}$. Then there exists a $\vct{y}_2\in \Re^m$ such that $C_2^\transpose \vct{y}_2=\vct{z}$. Choose such a $\vct{y}_2$ and define $\vct{v}_2$ as
\begin{equation}\label{eq:v2Def}
    \vct{v}_2\coloneqq \vct{y}_2-
	\frac{\vct{y}_2^\transpose C \vct{x}_1}{\vct{y}_1^\transpose C \vct{x}_1}\vct{y}_1. 
\end{equation}

Directly from Theorem \ref{thm:wedderburn}, we have
	\begin{align}
	\vct{y}_2^\transpose C_2&= \vct{y}_2^\transpose C-w_1^{-1}\vct{y}_2^\transpose C\vct{x}_1\vct{y}_1^\transpose C
    =\bigg(\vct{y}_2^\transpose- \frac{\vct{y}_2^\transpose C \vct{x}_1}{\vct{y}_1^\transpose C \vct{x}_1}\vct{y}_1^\transpose \bigg)C
    = \vct{v}_2^\transpose C \label{eq:prop:colSpanC2:v2C}
	\end{align}
Then by \eqref{eq:prop:colSpanC2:v2C}, $\vct{z}=C^\transpose\vct{v}_2$ which implies that $\vct{z}\in \colspan{C^\transpose}$. From the proof of Theorem \ref{thm:wedderburn}, it is clear that $\vct{x}_1\in \nullspace{C_2}$. Then
\begin{equation*}
   \vct{z}^\transpose \vct{x}_1 = \vct{y}_2^\transpose C_2\vct{x}_1=\vct{v}_2^\transpose C \vct{x}_1=0,
\end{equation*}
which implies that $\vct{z}\perp\vct{x}_1$.

Now, suppose that $\vct{z}\in \colspan{C^\transpose}$ and $\vct{z}\perp\vct{x}_1$. Since $\rank{C}\geq2$, the $\colspan{C^\transpose}$ is at least a two dimensional subspace, and thus there exists a $\vct{y}_1\in \Re^m$ such that $C^\transpose \vct{y}_1$ and $\vct{z}$ are linearly independent. Choose such a $\vct{y}_1$. Then, $\vct{y}_1^\transpose C\vct{x}_1\neq0$. Pick $\vct{y}_2$ such that $C^\transpose \vct{y}_2=\vct{z}$ and compute $\vct{v}_2$ as in \eqref{eq:v2Def}. Then,
\begin{equation}\label{eq:prop:colSpanC2:v2Equaly2}
    \vct{v}_2=\vct{y}_2-
	\frac{\vct{z}^\transpose\vct{x}_1}{\vct{y}_1^\transpose C \vct{x}_1}\vct{y}_1
	=\vct{y}_2-\frac{0}{\vct{y}_1^\transpose C \vct{x}_1}\vct{y}_1=\vct{y}_2.
\end{equation}
Due to \eqref{eq:prop:colSpanC2:v2C}, we have $\vct{y}_2^\transpose C_2=\vct{v}_2^\transpose C$, which implies $\vct{y}_2^\transpose C=\vct{z}$ from  \eqref{eq:prop:colSpanC2:v2Equaly2}. Thus, $\vct{z}\in \colspan{C_2^\transpose}$. \qed
\end{proof}

The requirement of $\vct{z}\in \colspan{C^\transpose}$ and $\vct{z}\perp\vct{x}_1$ can be equivalently written as
\begin{align}
 \begin{bmatrix} C & C\vct x_1\\ \vct z^\transpose & 0  \end{bmatrix}\mat{\vct x_1\\ -1} = \mat{0\\0}.
\end{align}
 
{\bf Case 4(a):} If $\vct 1_m \in\colspan{D}$, $\vct 1_n \in\colspan{D^\transpose}$, but the nullspace of $\begin{bmatrix} D & \vct{1}_m\\ \vct{1}^T_n & 0  \end{bmatrix}$ is of the form $\mat{\vct x\\ 0}$, then Proposition \ref{prop:colSpanC2} implies that $\vct 1_n\not\in\colspan{D_2}$. Thus, the rank can only be reduced by $l = 1$. Using the same operation as in Case 2, we get $\hat{\vct u} = D_{(i)}$ for any $i\in\{1,\ldots,m\}$, $\hat{\vct v} = \vct 0_m$.

{\bf Case 4(b):}
If $\vct 1_m \in\colspan{D}$, $\vct 1_n \in\colspan{D^\transpose}$, and the nullspace of $\begin{bmatrix} D & \vct{1}_m\\ \vct{1}^T_n & 0  \end{bmatrix}$ contains a vector of the form $\mat{\vct x\\ -1}$, then by Proposition \ref{prop:colSpanC2}, there exists $\vct{x}_1,\vct{x}_2 = \vct e_j\in\Rea^n$ and $\vct{y}_1 = \vct e_i,\vct{y}_2\in\Rea^m$ such that:
\begin{enumerate}
    \item $w_1=\vct{y}_1^\transpose D\vct{x}_1\neq 0$. Let $D\vct{x}_1=\vct{1}_m$, $\hat{\vct{u}}^\transpose=w_1^{-1}\vct{y}_1^\transpose D$, and compute $D_2$ using \eqref{eq:Wedderburn} as follows:
\begin{equation*}
    D_2 = D-w_1^{-1}D\vct{x}_1\vct{y}_1^\transpose D=D-\vct{1}_m\hat{\vct{u}}^\transpose. 
\end{equation*}
\item $\vct{y}_2^\transpose D_2=\vct{1}_n^\transpose$ and $w_2=\vct{y}_2^\transpose D_2\vct{x}_2\neq 0$. Let $\hat{\vct{v}}=w_2^{-1}D_2\vct{x}_2$ and compute $D_3$ using \eqref{eq:Wedderburn} as follows:
\begin{equation*}
    D_3 = D_2-w_2^{-1}D_2\vct{x}_2\vct{y}_2^\transpose D_2=D_2-\hat{\vct{v}}\vct{1}_n^\transpose=D-\vct{1}_m\hat{\vct{u}}^\transpose-\hat{\vct{v}}\vct{1}_n^\transpose. 
\end{equation*}
\end{enumerate}

\bibliographystyle{spbasic}      
\bibliography{ser1_bib}   

\newcommand{\noop}[1]{}
\begin{thebibliography}{40}
\providecommand{\natexlab}[1]{#1}
\providecommand{\url}[1]{{#1}}
\providecommand{\urlprefix}{URL }
\expandafter\ifx\csname urlstyle\endcsname\relax
  \providecommand{\doi}[1]{DOI~\discretionary{}{}{}#1}\else
  \providecommand{\doi}{DOI~\discretionary{}{}{}\begingroup
  \urlstyle{rm}\Url}\fi
\providecommand{\eprint}[2][]{\url{#2}}

\bibitem[{Adler et~al.(2009)Adler, Daskalakis, and
  Papadimitriou}]{adler2009note}
Adler I, Daskalakis C, Papadimitriou CH (2009) A note on strictly competitive
  games. In: International Workshop on Internet and Network Economics,
  Springer, pp 471--474

\bibitem[{Adsul et~al.(2011)Adsul, Garg, Mehta, and Sohoni}]{adsul2011rank}
Adsul B, Garg J, Mehta R, Sohoni M (2011) Rank-1 bimatrix games: a
  homeomorphism and a polynomial time algorithm. In: Proceedings of the
  Forty-Third Annual ACM Symposium on Theory of Computing, ACM, pp 195--204

\bibitem[{Adsul et~al.(2020)Adsul, Garg, Mehta, Sohoni, and
  Von~Stengel}]{adsul2020fast}
Adsul B, Garg J, Mehta R, Sohoni M, Von~Stengel B (2020) Fast algorithms for
  rank-1 bimatrix games. Operations Research

\bibitem[{Alon et~al.(2013)Alon, Lee, Shraibman, and
  Vempala}]{alon2013approximate}
Alon N, Lee T, Shraibman A, Vempala S (2013) The approximate rank of a matrix
  and its algorithmic applications: approximate rank. In: Proceedings of the
  forty-fifth annual ACM symposium on Theory of computing, pp 675--684

\bibitem[{Alt(1979)}]{alt1979square}
Alt H (1979) Square rooting is as difficult as multiplication. Computing
  21(3):221--232

\bibitem[{Aumann(1961)}]{aumann1961almost}
Aumann RJ (1961) Almost strictly competitive games. Journal of the Society for
  Industrial and Applied Mathematics 9(4):544--550

\bibitem[{Barman(2018)}]{barman2018approximating}
Barman S (2018) Approximating {N}ash equilibria and dense subgraphs via an
  approximate version of carath{\'e}odory's theorem. SIAM Journal on Computing
  47(3):960--981

\bibitem[{Bosse et~al.(2007)Bosse, Byrka, and Markakis}]{bosse2007new}
Bosse H, Byrka J, Markakis E (2007) New algorithms for approximate {N}ash
  equilibria in bimatrix games. In: International Workshop on Web and Internet
  Economics, Springer, pp 17--29

\bibitem[{Chu et~al.(1995)Chu, Funderlic, and Golub}]{chu1995rank}
Chu MT, Funderlic RE, Golub GH (1995) A rank--one reduction formula and its
  applications to matrix factorizations. SIAM Review 37(4):512--530

\bibitem[{Daskalakis et~al.(2006)Daskalakis, Mehta, and
  Papadimitriou}]{daskalakis2006note}
Daskalakis C, Mehta A, Papadimitriou C (2006) A note on approximate {N}ash
  equilibria. In: International Workshop on Internet and Network Economics,
  Springer, pp 297--306

\bibitem[{Daskalakis et~al.(2007)Daskalakis, Mehta, and
  Papadimitriou}]{daskalakis2007progress}
Daskalakis C, Mehta A, Papadimitriou C (2007) Progress in approximate {N}ash
  equilibria. In: Proceedings of the 8th ACM conference on Electronic commerce,
  ACM, pp 355--358

\bibitem[{Dell et~al.(1971)Dell, Weil, and Thompson}]{dell1971algorithm}
Dell AM, Weil RL, Thompson GL (1971) Algorithm 405: roots of matrix pencils:
  the generalized eigenvalue problem [f2]. Communications of the ACM
  14(2):113--117

\bibitem[{Demmel and K{\aa}gstr{\"o}m(1993)}]{demmel1993generalized}
Demmel J, K{\aa}gstr{\"o}m B (1993) The generalized schur decomposition of an
  arbitrary pencil a--$\lambda$b—robust software with error bounds and
  applications. part i: theory and algorithms. ACM Transactions on Mathematical
  Software (TOMS) 19(2):160--174

\bibitem[{Gantmacher(1959)}]{gantmakher1959theory}
Gantmacher FR (1959) The Theory of Matrices, vol~2. Chelsea Publishing Company

\bibitem[{Golub and Van~Loan(2013)}]{golub2012matrix}
Golub GH, Van~Loan CF (2013) Matrix computations, vol~4. JHU press

\bibitem[{Hespanha(2017)}]{hespanha2017noncooperative}
Hespanha JP (2017) Noncooperative game theory: An introduction for engineers
  and computer scientists. Princeton University Press

\bibitem[{Ikramov(1993)}]{ikramov1993matrix}
Ikramov KD (1993) Matrix pencils: Theory, applications, and numerical methods.
  Journal of Soviet Mathematics 64(2):783--853

\bibitem[{Isaacson and Millham(1980)}]{isaacson1980class}
Isaacson K, Millham C (1980) On a class of {N}ash-solvable bimatrix games and
  some related {N}ash subsets. Naval Research Logistics Quarterly
  27(3):407--412

\bibitem[{Kannan and Theobald(2007)}]{kannan2007games}
Kannan R, Theobald T (2007) Games of fixed rank: A hierarchy of bimatrix games.
  In: Proceedings of the Eighteenth Annual ACM-SIAM Symposium on Discrete
  Algorithms, SIAM, pp 1124--1132

\bibitem[{Kannan and Theobald(2010)}]{kannan2010games}
Kannan R, Theobald T (2010) Games of fixed rank: A hierarchy of bimatrix games.
  Economic Theory 42(1):157--173

\bibitem[{Kats and Thisse(1992)}]{kats1992unilaterally}
Kats A, Thisse JF (1992) Unilaterally competitive games. International Journal
  of Game Theory 21(3):291--299

\bibitem[{Kontogiannis and Spirakis(2012)}]{kontogiannis2012mutual}
Kontogiannis S, Spirakis P (2012) On mutual concavity and
  strategically-zero-sum bimatrix games. Theoretical Computer Science
  432:64--76

\bibitem[{Kontogiannis and Spirakis(2010)}]{kontogiannis2010well}
Kontogiannis SC, Spirakis PG (2010) Well supported approximate equilibria in
  bimatrix games. Algorithmica 57(4):653--667

\bibitem[{Kontogiannis et~al.(2006)Kontogiannis, Panagopoulou, and
  Spirakis}]{kontogiannis2006polynomial}
Kontogiannis SC, Panagopoulou PN, Spirakis PG (2006) Polynomial algorithms for
  approximating {N}ash equilibria of bimatrix games. In: International Workshop
  on Internet and Network Economics, Springer, pp 286--296

\bibitem[{Lipton et~al.(2003)Lipton, Markakis, and Mehta}]{lipton2003}
Lipton RJ, Markakis E, Mehta A (2003) Playing large games using simple
  strategies. In: Proceedings of the 4th ACM Conference on Electronic Commerce,
  ACM, pp 36--41

\bibitem[{Liu(1996)}]{liu1996invariance}
Liu LA (1996) The Invariance of Best Reply Correspondences in Two-player Games.
  93, City University of Hong Kong, Faculty of Business, Department of
  Economics

\bibitem[{Moler and Stewart(1973)}]{moler1973algorithm}
Moler CB, Stewart GW (1973) An algorithm for generalized matrix eigenvalue
  problems. SIAM Journal on Numerical Analysis 10(2):241--256

\bibitem[{Moulin and Vial(1978)}]{moulin1978strategically}
Moulin H, Vial JP (1978) Strategically zero-sum games: the class of games whose
  completely mixed equilibria cannot be improved upon. International Journal of
  Game Theory 7(3-4):201--221

\bibitem[{Myerson(2013)}]{myerson2013game}
Myerson RB (2013) Game theory. Harvard university press

\bibitem[{Nash(1951)}]{nash1951}
Nash J (1951) Non-cooperative games. Annals of Mathematics pp 286--295

\bibitem[{Neumann(1928)}]{neumann1928theorie}
Neumann Jv (1928) Zur theorie der gesellschaftsspiele. Mathematische annalen
  100(1):295--320

\bibitem[{Pan and Chen(1999)}]{pan1999complexity}
Pan VY, Chen ZQ (1999) The complexity of the matrix eigenproblem. In:
  Proceedings of the thirty-first annual ACM symposium on Theory of computing,
  pp 507--516

\bibitem[{Possieri and Hespanha(2017)}]{possieri2017algebraic}
Possieri C, Hespanha JP (2017) An algebraic geometry approach to compute
  strategically equivalent bimatrix games. IFAC-PapersOnLine 50(1):6116--6121

\bibitem[{Theobald(2009)}]{theobald2009enumerating}
Theobald T (2009) Enumerating the {N}ash equilibria of rank-1 games. In: CRM
  Proceedings and Lecture Notes, Citeseer, vol~48

\bibitem[{Thompson and Weil(1970)}]{thompson1970reducing}
Thompson GL, Weil RL (1970) Reducing the rank of $(a-\lambda b)$. Proceedings
  of the American Mathematical Society pp 548--554

\bibitem[{Thompson and Weil(1972)}]{thompson1972roots}
Thompson GL, Weil RL (1972) The roots of matrix pencils $(ay = \lambda by)$:
  Existence, calculations, and relations to game theory. Linear Algebra and Its
  Applications 5(3):207--226

\bibitem[{Von~Neumann(1959)}]{von1959theory}
Von~Neumann J (1959) On the theory of games of strategy. Contributions to the
  Theory of Games 4:13--42

\bibitem[{Von~Neumann and Morgenstern(2007)}]{von2007theory}
Von~Neumann J, Morgenstern O (2007) Theory of Games and Economic Behavior
  (Commemorative Edition). Princeton University Press

\bibitem[{Wedderburn(1934)}]{wedderburn1934lectures}
Wedderburn JHM (1934) Lectures on Matrices, vol~17. American Mathematical Soc.

\bibitem[{Weil(1968)}]{weil1968game}
Weil RL Jr (1968) Game theory and eigensystems. SIAM Review 10(3):360--367

\end{thebibliography}

\end{document}